%  **************************************************************************
%  AUTHORS: R. Dougherty, R. D. Mauldin, M. Tiefenbruck
%       
%  TITLE: The covering radius of the Reed-Muller code $\RM(m-4,m)$
%         in $\RM(m-3,m)$
%
%  **************************************************************************
%
\documentclass[12pt]{article}
\usepackage{amsmath,amssymb,amsthm}
\usepackage{fancyhdr}
\usepackage{hyperref} % No, this one has to be last!

%%%%%%%%%%%% printtime.sty copied here %%%%%%%%%%%%%
%
% Idea from Mike Erlinger
%
\usepackage{calc}
\newcounter{hours}\newcounter{minutes}
\newcommand{\findtime}{%
    \setcounter{hours}{\time/60}%
    \setcounter{minutes}{\time-\value{hours}*60}}
\newcommand{\paddedminutes}{%
    \ifnum\value{minutes}<10 0\theminutes\else \theminutes\fi
    }

\newcommand{\theampmtime}{%
    \findtime
    \ifnum\value{hours}>12
        \setcounter{hours}{\value{hours}-12}%
        \thehours:\paddedminutes~pm%
        \findtime
    \else
        \ifnum\value{hours}=0
            12:\paddedminutes~am%
        \else
        \ifnum\value{hours}=12
            12:\paddedminutes~pm%
        \else
            \thehours:\paddedminutes~am%
        \fi
        \fi
    \fi}
%%%%%%%%%%%% end of printtime.sty %%%%%%%%%%%%%

%\def\submitteddate{August 28, 2020}
%\def\reviseddate{July 8, 2021}

\setlength{\oddsidemargin}{0in}
\setlength{\evensidemargin}{0in}
\setlength{\textwidth}{6.5in} %old value 6.5in
\setlength{\topmargin}{-0.0in}
\setlength{\textheight}{8.7in} %old value 8.6in

\begin{document}

\newcommand{\creationtime}{\today \ \ @ \theampmtime}

\pagestyle{fancy}
\renewcommand{\headrulewidth}{0cm}
\chead{\footnotesize{Dougherty-Mauldin-Tiefenbruck}}
%\rhead{\footnotesize{\reviseddate}}
\rhead{\footnotesize{\creationtime}}
%\rhead{\footnotesize{\submitteddate}}
%\lhead{Version \version}
\lhead{}

\renewcommand{\qedsymbol}{$\blacksquare$} % the qed generated by \end{proof}

\newtheorem{theorem}              {Theorem}     [section]
\newtheorem{lemma}      [theorem] {Lemma}
\newtheorem{corollary}  [theorem] {Corollary}
\newtheorem{proposition}[theorem] {Proposition}
\newtheorem{algorithm}  [theorem] {Algorithm}
\newtheorem{conjecture} [theorem] {Conjecture}
\newtheorem{example}    [theorem] {Example}

\theoremstyle{definition}         
\newtheorem{definition} [theorem] {Definition}
\newtheorem*{claim}  {Claim}
\newtheorem*{notation}  {Notation}
\newtheorem*{remark}  {Remark}

\newcommand{\n}{m}
\newcommand{\wt}{\mathrm{wt}}
\newcommand{\x}{X}
\newcommand{\y}{Y}
\newcommand{\comppoly}[1]{{#1^c}}
\newcommand{\RM}{RM}
\newcommand{\coef}[1]{{\textrm{coef}(#1)}}
\newcommand{\ordinal}[1]{#1{\rm th}}
\newcommand{\BLpoly}[1]{{f_{#1}^{BL{*}}}}
\newcommand{\GFtwo}{\mathbf{F}_2}

%\begin{titlepage}

\setcounter{page}{0}

\title{The Covering Radius of the Reed--Muller Code $\RM(\n-4,\n)$
in $\RM(\n-3,\n)$
\thanks{This work was supported by the 
Institute for Defense Analyses.\newline
\indent \textbf{R. Dougherty}, \textbf{R. D. Mauldin}, and
\textbf{M. Tiefenbruck} are with the 
IDA Center for Communications Research, 
4320 Westerra Court, 
San Diego, CA 92121-1969 
(rdough@ccrwest.org, mauldin@ccrwest.org, mgtiefe@ccrwest.org).\newline
}}

\author{Randall Dougherty, R. Daniel Mauldin, and Mark Tiefenbruck\\ \ }

\date{
\textit{
%IEEE Transactions on Information Theory\\
%Version: \version\\
%Created: \creationtime \\
%Submitted: \submitteddate\\
%Revised:  \reviseddate 
Draft:   \creationtime 
% \today \\
%\Huge{Draft}
}}

\maketitle
\begin{abstract}
We present methods for computing the distance from a Boolean polynomial
on $\n$~variables of degree~$\n-3$ (i.e., a member of the Reed--Muller
code $\RM(\n-3,\n)$) to the space of lower-degree polynomials
($\RM(\n-4,\n)$). The methods give verifiable certificates for both
the lower and upper bounds on this distance.  By applying these
methods to representative lists of polynomials, we show that the
covering radius of $\RM(4,8)$ in $\RM(5,8)$ is 26 and the covering
radius of $\RM(5,9)$ in $\RM(6,9)$ is between 28 and 32 inclusive,
and we get improved lower bounds for higher~$\n$.  We also apply
our methods to various polynomials in the literature, thereby
improving the known bounds on the distance from 2-resilient
polynomials to $\RM(\n-4,\n)$.

\textbf{Index Terms}---Reed--Muller code, covering radius,
nonlinearity, minimum weight, certificate, resilient.
\end{abstract}

\thispagestyle{empty}
%\end{titlepage}

\newpage
\section{Introduction}

Let $\RM(r,\n)$ be the \ordinal{$r$} order Reed-Muller code
of length~$2^\n$, which is the set of truth tables of
all Boolean polynomials of degree at most~$r$ 
on $\n$~variables.  The Reed-Muller codes are one of the best-understood
families of codes and have favorable properties such as
being relatively easy to decode~\cite{MacWilliams-Sloane},
but many questions regarding their covering radii remain open.

We will be studying distances from members of $\RM(r,\n)$
to $\RM(r-1,\n)$,
where the distance between two truth tables is defined as the
number of inputs on which they differ.
The maximum distance from any element of $\RM(r,\n)$ to the closest
element of $\RM(r-1,\n)$
is the covering radius of $\RM(r-1,\n)$
in $\RM(r,\n)$.
We will concentrate on the case $r=\n-3$
in this paper.
This is the largest unsolved case; 
the easy cases $r=\n$ and $r=\n-1$ are in~\cite[Sec.~9.1]{Cohen-book},
and $r=\n-2$ is settled in~\cite{McLoughlin}.

Schatz~\cite{Schatz} showed that the covering radius
of $\RM(2,6)$ (and, in particular, the covering radius
of $\RM(2,6)$ in $\RM(3,6)$) is~18.  
Hou~\cite{Hou2} later gave a noncomputer proof of this result.
Wang, Tan, and Prabowo \cite{Wang-Tan-Prabowo}
showed that the covering radius of $\RM(3,7)$ in $\RM(4,7)$
is~20.
Hou~\cite{Hou2}
showed that the covering radius of $\RM(4,8)$ in $\RM(5,8)$
is at least~22.
McLoughlin~\cite{McLoughlin}
showed that the covering radius of $\RM(\n-4,\n)$ in $\RM(\n-3,\n)$
is at least $2\n+2$ for odd~$\n$ and $2\n$ for even~$\n$;
asymptotically the sphere-covering
bound~\cite{MacWilliams-Sloane} gives a quadratic lower bound
on this covering radius.

In this paper, we present
(in Sections~\ref{sec:initlower}--\ref{sec:certalgs})
a method for proving a lower bound on
the distance from~$f$ to $\RM(\n-4,\n)$, where $f$~is a given
member of $\RM(\n-3,\n)$.
The method can require substantial computation
to generate a proof but produces a ``certificate'' that can
be used to verify the lower bound with far less computation.
We also describe
(in Section~\ref{sec:upper})
simple methods for producing an upper bound
on the distance (by finding a member of $\RM(\n-4,\n)$ close to~$f$,
so the upper bound is also easily verifiable).
By applying these methods to suitable
representative lists of Boolean polynomials, we show
(in Sections~\ref{sec:78var}--\ref{sec:9var})
that the
covering radius of $\RM(4,8)$ in $\RM(5,8)$ is 26 and the covering
radius of $\RM(5,9)$ in $\RM(6,9)$ is between 28 and 32 inclusive.
(The certificates and other data will be made available online.)
We also show
(in Section~\ref{sec:morevar})
that certificates for polynomials on $\n$ variables
can be modified to give certificates for certain polynomials on
more than $\n$ variables, thus giving improved lower bounds
on the covering radius of $\RM(\n'-4,\n')$ in $\RM(\n'-3,\n')$
for higher~$\n'$.

Once we have the distance to $\RM(\n-4,\n)$ (or lower and upper bounds on
it) for each of the representative polynomials, we can get this information
for any polynomial in $\RM(\n-3,\n)$ by looking up which representative
polynomial it is equivalent to.  In Section~\ref{sec:crypto}, we use this
method to compute this distance for a number of polynomials from the
literature that have desirable cryptographic properties.  This lets us
give improved bounds on several values of the function
$\hat\rho(t,r,n)$, the maximum distance between
a $t$-resilient function and $\RM(r,n)$.

\section{Preliminaries}

An $\n$-variable Boolean function is a function from $\GFtwo^\n$
to $\GFtwo$, where $\GFtwo$ is the two-element field; we also refer to
such functions as truth tables.  Any $\n$-variable Boolean function has a
unique representation as an $\n$-variable Boolean polynomial, a polynomial
over $\GFtwo$ where no variable occurs to a power higher than~$1$
(because $\GFtwo$ satisfies the identity $\x^2=\x$).
The Reed-Muller code
$\RM(r,\n)$ is the set of truth tables of
all Boolean polynomials of degree at most~$r$ 
on $\n$~variables.

{\bf Note:} For the rest of this paper we will
not distinguish between a polynomial and its truth table.

The book of MacWilliams and Sloane~\cite{MacWilliams-Sloane}
is a good reference for basic results on Reed--Muller codes.
In addition to the book~\cite{Cohen-book},
papers giving general upper and/or lower bounds on the covering radii of
Reed--Muller codes include \cite{Carlet-lb}, \cite{Carlet-Mesnager},
\cite{Cohen-Litsyn}, and~\cite{McLoughlin}.

For a given Boolean polynomial~$f$ on $\n$~variables, the
\emph{weight} of~$f$ is the number of input vectors $v\in \GFtwo^\n$
such that $f(v) = 1$ (i.e., the number of 1s in the truth
table for~$f$).  The \emph{distance} from~$f$ to a set~$S$
is the minimum of the weights of polynomials $f+p$
(which is the same as $f-p$) for $p$ in~$S$.

The distance from~$f$ to $\RM(r-1,\n)$ is
also known as the nonlinearity of order $r-1$ of~$f$
or the minimum weight of the coset $f + \RM(r-1,\n)$.
We will use the term ``minimum weight'' here; in fact, we
will sometimes abuse terminology slightly by saying
``minimum weight of~$f$'' as an abbreviation for
``minimum weight of the coset $f + \RM(r-1,\n)$''.

We refer to polynomials of
degree $\n-3$ as cocubic polynomials; similarly, we will use the words
``coquadratic'' for degree $\n-2$ and ``coquartic'' for degree $\n-4$.
(We will not use the word ``colinear.'')

The most straightforward (but usually not the most efficient)
way to compute the covering radius of $\RM(r-1,\n)$ in
$\RM(r,\n)$ is to compute the minimum weight of every
coset $f+\RM(r-1,\n)$ where $f$ is homogeneous of degree~$r$.

The work here can be cut down considerably by considering
the action of the 
general linear group $GL(\n,2)$ on Boolean polynomials:
an element~$h$ of $GL(\n,2)$ sends polynomial~$f$
to $f \circ h$.  This action is linear on~$f$ and
preserves the degree of~$f$ and also the
weight of~$f$ (since $h$ just permutes the inputs to~$f$).
It follows that $h$ maps the coset $f+\RM(r-1,\n)$ to
the coset $(f \circ h)+\RM(r-1,\n)$ (although
$f\circ h$ may not be homogeneous
even if $f$ is), and the action preserves
minimum weights of cosets.  Therefore,
this action partitions the cosets $f+\RM(r-1,\n)$ into equivalence classes,
and it suffices
to compute the minimum weight of one representative coset
from each equivalence class.

Another useful operation is complementation: given a polynomial~$f$
on $\n$~variables that is homogeneous of degree~$r$, one can
get the complementary polynomial~$\comppoly{f}$ of degree $\n-r$
by replacing each $r$-variable monomial in~$f$ with the
product of the other $\n-r$ variables.  It turns out, as shown
in Hou~\cite{Hou}, that homogeneous degree-$r$ polynomials $f$ and~$g$
are in $GL(\n,2)$-equivalent cosets of $\RM(r-1,\n)$ if and
only if $\comppoly{f}$ and~$\comppoly{g}$ are in
$GL(\n,2)$-equivalent cosets of $\RM(\n-r-1,\n)$.
Hence, given a list of homogeneous cubic polynomials that
(together with 0) are representatives for the
$GL(\n,2)$-equivalence classes of $\RM(3,\n)/\RM(2,\n)$,
we can complement these polynomials to get a list of homogeneous
cocubic polynomials that
(together with 0) are representatives for the
$GL(\n,2)$-equivalence classes of $\RM(\n-3,\n)/\RM(\n-4,\n)$.

Hou~\cite{Hou} gives a list of representative cubics in
up to 8 variables (six in up to 6~variables, six more in 7~variables,
and twenty more in 8~variables).
Brier and Langevin~\cite{Brier-Langevin} produced a list
of 349 representative 9-variable cubics, which is available from
Langevin's web site~\cite{Brier-Langevin-web}.
We can complement the polynomials in these lists to get
lists of representative cocubic polynomials.  (In the case
of 6~variables, where cubic is the same as cocubic,
we will not bother to complement.)

Two more facts will be useful to us later.
One is that an $\n$-variable Boolean polynomial has degree $\n$
if and only if its weight is
odd~\cite[Chapter~13, Problem~5]{MacWilliams-Sloane}.
The other is the following proposition.

\begin{proposition}
\label{prop:multvar}
If the $\n$-variable Boolean polynomial $p$ is such that the coset\/
$p+\RM(r,\n)$ has minimum weight $w$,
and\/ $\y$ is a new variable, then
the coset\/ $\y p+\RM(r+1,\n+1)$ has minimum weight $w$.
\end{proposition}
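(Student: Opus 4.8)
The plan is to show two inequalities: the coset $\y p + \RM(r+1,\n+1)$ has minimum weight at most $w$ and at least $w$. Throughout I will think of an $(\n+1)$-variable truth table as a pair $(a,b)$ of $\n$-variable truth tables, where $a$ is the restriction to $\y=0$ and $b$ is the restriction to $\y=1$; then $\wt(a,b) = \wt(a) + \wt(b)$. In this notation $\y p$ corresponds to the pair $(0,p)$, since $\y p$ vanishes when $\y=0$ and equals $p$ when $\y=1$.

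For the upper bound, take a polynomial $q \in \RM(r,\n)$ with $\wt(p+q) = w$. Then $\y q$ has degree at most $r+1$, so $\y q \in \RM(r+1,\n+1)$, and $\y p + \y q = \y(p+q)$ corresponds to the pair $(0, p+q)$, which has weight $0 + w = w$. Hence the minimum weight of the coset is at most $w$.

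For the lower bound, let $g$ be any element of $\RM(r+1,\n+1)$; I must show $\wt(\y p + g) \ge w$. Write $g$ in the pair notation as $(g_0, g_1)$. The key algebraic fact is that every $(\n+1)$-variable polynomial can be written uniquely as $g = g_0 + \y h$ where $g_0, h$ are $\n$-variable polynomials (here $g_0$ is the $\y=0$ part and $h = g_0 + g_1$ is the ``derivative'' in the $\y$ direction); moreover $\deg g \le r+1$ forces $\deg g_0 \le r+1$ but, more importantly, $\deg(\y h) \le r+1$, i.e.\ $\deg h \le r$. Thus $h \in \RM(r,\n)$. Now $\y p + g$ corresponds to the pair $(g_0,\, p + g_0 + h)$, so
\[
\wt(\y p + g) = \wt(g_0) + \wt(p + g_0 + h) \ge \wt(p + (g_0+h)).
\]
Wait --- that last step is not immediate; instead I should argue directly that $\wt(g_0) + \wt(p+g_0+h) \ge \wt(p+h)$. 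This follows from the triangle inequality for Hamming weight: $\wt(u) + \wt(v) \ge \wt(u+v)$ applied with $u = g_0$ and $v = p + g_0 + h$ gives $u+v = p+h$, so indeed $\wt(\y p + g) \ge \wt(p+h)$. Since $h \in \RM(r,\n)$, we have $\wt(p+h) \ge w$ by hypothesis. Therefore $\wt(\y p + g) \ge w$ for all $g \in \RM(r+1,\n+1)$, completing the lower bound.

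The main obstacle is getting the degree bookkeeping exactly right: one must verify that the decomposition $g = g_0 + \y h$ with $g_0,h$ not involving $\y$ is legitimate for Boolean polynomials (it is, since $\y^2 = \y$ lets us collect all $\y$-terms) and that $\deg g \le r+1$ genuinely implies $\deg h \le r$ (the monomials of $\y h$ are $\y$ times monomials of $h$, so each has degree one more than the corresponding monomial of $h$). Once that is in hand, everything reduces to the weight triangle inequality and the two hypotheses, so no real computation is needed.
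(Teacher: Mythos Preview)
Your proof is correct and is essentially the same as the paper's: both decompose an arbitrary element of $\y p+\RM(r+1,\n+1)$ as $\y p + q' + \y q$ with $q\in\RM(r,\n)$ and $q'\in\RM(r+1,\n)$ (your $g_0$ and $h$ are the paper's $q'$ and $q$), evaluate the weight as $\wt(q')+\wt(p+q+q')$, and apply the triangle inequality $\wt(u)+\wt(v)\ge\wt(u+v)$ to bound this below by $\wt(p+q)\ge w$. The only difference is presentation.
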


This is included in the proof of \cite[Lemma~9.3.1]{Cohen-book},
but we provide a proof here for convenience.

\begin{proof}
Let $\wt_\n(q)$ denote the weight of a Boolean polynomial~$q$ on
$\n$~variables.
If $p'$ is in $p+\RM(r,\n)$ and has weight $w$, then
$\y p'$ is in $\y p+\RM(r+1,\n+1)$ and has weight $w$.
It remains to show that any polynomial in $\y p+\RM(r+1,\n+1)$
has weight at least~$w$.  Any such polynomial has the form
$\y p + \y q + q'$ where $q$ and~$q'$ are polynomials in the
original $\n$~variables, $q$~has degree at most~$r$, and $q'$~has
degree at most~$r+1$.  There are
$\wt_\n(q')$ solutions to $\y p + \y q + q' = 1$ with $\y = 0$, and
$\wt_\n(p + q + q')$ solutions with $\y = 1$.  So
$\wt_{\n+1}(\y p + \y q + q') = \wt_\n(p + q + q') + \wt_\n(q')
\ge \wt_\n(p + q + q' + q') = \wt_\n(p + q) \ge w$,
as desired.
\end{proof}

\section{Initial lower bound results}
\label{sec:initlower}

We start by showing that most cocubic Boolean polynomials on $\n$~variables
have weight at least $2\n+2$.

\begin{theorem}
\label{thm:weight}
Any Boolean polynomial~$f$ of degree at most $\n-3$ in $\n$~variables
that has no affine (degree-1) factors
has weight at least $2\n+2$.
\end{theorem}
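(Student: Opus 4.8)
The plan is to induct on $\n$, using the complementation duality and the structure theory of Boolean polynomials with no affine factors. First I would dispose of small cases: for $\n \le 5$ the claim can be checked directly (for $\n=3$, degree $0$ means $f$ is the constant $1$ — wait, the constant $1$ has no affine factors and weight $2^3=8 \ge 8$; for $\n=4$, degree $\le 1$ with no affine factor forces $f=1$ again; for $\n=5$ degree $\le 2$, and a quadratic with no affine factor is equivalent to a sum of $k\ge 1$ disjoint products $x_ix_{i+1}$ plus possibly a constant, whose weights are easy to compute and all exceed $12$). These base cases also illustrate the general mechanism.

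For the inductive step I would exploit the fact that $f$ has low degree, so $\comppoly f$ has high degree, and instead argue directly on $f$ itself. Pick any variable, say $x_\n$, and write $f = x_\n g + h$ where $g,h$ are polynomials in $x_1,\dots,x_{\n-1}$. Then $\wt_\n(f) = \wt_{\n-1}(h) + \wt_{\n-1}(g+h)$. The degree of $g$ is at most $\n-4 = (\n-1)-3$, so if $g$ has no affine factors we may apply the induction hypothesis to get $\wt_{\n-1}(g) \ge 2(\n-1)+2 = 2\n$; combined with a lower bound on $\wt_{\n-1}(h)$ or $\wt_{\n-1}(g+h)$ this should push the total past $2\n+2$. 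The difficulty is that $g$ (or $h$, or $g+h$) may itself have affine factors, or may be zero, so a single coordinate split is not enough. I would therefore argue that one can always choose the splitting variable — after a $GL(\n,2)$ change of coordinates, which preserves weight, degree, and the no-affine-factor property — so that the relevant restriction is nondegenerate. Concretely, since $f$ has no affine factor, $f$ is not annihilated by setting any nonzero affine form to a constant, so the ``derivative'' $g = f|_{x_\n=1} + f|_{x_\n=0}$ can be made nonzero; and if $g$ has an affine factor $\ell$, then $\ell$ divides neither $h$ nor $g+h$ simultaneously, which gives a contradiction unless we can re-split.

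The cleanest route is probably to consider the set of affine hyperplanes and count: if $f$ has no affine factor then for every nonzero affine form $\ell$, the restriction $f|_{\ell=0}$ is a nonzero polynomial on an $(\n-1)$-dimensional affine space. If that restriction, viewed in suitable coordinates as an $(\n-1)$-variable polynomial, still has degree $\le (\n-1)-3$ and no affine factor, induction gives weight $\ge 2\n$ on each of the two parallel hyperplanes $\ell=0$ and $\ell=1$, hence $\wt(f)\ge 2\cdot 2\n$, far more than needed; the real work is handling the cases where restrictions pick up affine factors or drop to the constant $1$. In those degenerate cases $f$ has a very constrained form — essentially $f = 1 + (\text{product of a few affine forms and lower-degree pieces})$ — and I would enumerate these forms and check the weight bound by the inclusion–exclusion weight formula for products of affine forms, $\wt(\prod_{i=1}^k \ell_i) = 2^{\n-k}$ for independent $\ell_i$, together with the parity fact (weight odd iff degree $\n$). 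The main obstacle I anticipate is exactly this case analysis of degenerate restrictions: ensuring that every cocubic polynomial with no affine factor either restricts nicely to apply induction, or falls into one of finitely many explicitly-bounded exceptional shapes. Getting that dichotomy airtight — rather than the arithmetic, which is routine — is where the effort will go.
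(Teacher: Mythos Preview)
Your inductive plan has a structural gap that breaks both branches. In the ``restriction'' branch you want to apply the induction hypothesis to $f|_{\ell=0}$ as an $(\n-1)$-variable polynomial of degree at most $(\n-1)-3$; but restricting only drops the number of variables, not the degree, so $f|_{\ell=0}$ has degree at most $\n-3 = (\n-1)-2$, one too large for the hypothesis. In the ``derivative'' branch you correctly note that $g = f|_{\x_\n=1}+f|_{\x_\n=0}$ has degree at most $(\n-1)-3$, so induction (when $g$ has no affine factor) gives $\wt_{\n-1}(g)\ge 2\n$; but the identity $\wt_\n(f)=\wt_{\n-1}(h)+\wt_{\n-1}(g+h)\ge \wt_{\n-1}(g)$ only yields $2\n$, and your remark that $h$ and $g+h$ are both nonzero does not by itself recover the missing $+2$ (equality in the triangle inequality occurs exactly when $h(g+h)=0$, which you have not ruled out). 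On top of this, the degenerate case where $g$ itself has an affine factor is left as an open-ended enumeration, and there is no reason to expect it to terminate in finitely many ``exceptional shapes'' uniformly in $\n$.

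The paper bypasses all of this with a single linear-algebra observation: form the $(\n+1)\times 2^\n$ truth-table matrix with rows $f,\x_1f,\dots,\x_\n f$; the no-affine-factor hypothesis makes the rows linearly independent, and the degree bound $\deg f\le \n-3$ forces every product $\x_i\x_j f$ (including $i=j$) to have degree $<\n$ and hence even weight, so the rows are pairwise orthogonal over $\GFtwo$. A self-orthogonal space of dimension $\n+1$ must live in an ambient space of dimension at least $2\n+2$, and deleting the all-zero columns identifies that ambient dimension with $\wt(f)$. This argument uses the hypotheses exactly once each and needs no induction or case analysis.
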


\begin{proof}
For any such $f$, construct an $(\n+1)\times 2^{\n}$ matrix~$T$
over $\GFtwo$ as follows.
We label the rows with the polynomials
$f, \x_1f, \x_2f,\dots,\x_\n f$, and we label the columns with
the $2^\n$ possible settings for the input variables $\x_1,\dots,\x_\n$;
the entry for row $g$ and column $v$ is the value $g(v)$.
(So this is a joint truth table for the listed polynomials.)

We first show that the rows of $T$ are linearly independent; this
is equivalent to saying that the label polynomials are linearly
independent.  Suppose the label polynomials are not linearly independent;
then some nonempty subset of them has sum~0, which means that
$af = 0$ where $a$ is a nonzero polynomial of degree at most~1.
We cannot have $a = 1$, since this would give $f=0$, and the polynomial~$0$
has affine factors.  So $a$ has degree 1, and so does $a+1$, and we have
$(a+1)f = af+f = f$, so $f$ has an affine factor, contradicting the
assumptions of the theorem.

Next, we show that the rows of~$T$ are orthogonal to themselves and
to each other.  The dot product of truth table rows labeled $g_1$ and
$g_2$ is the number of input vectors $v$ such that $g_1(v) = g_2(v) = 1$,
reduced modulo~2; this is the same as the weight of $g_1g_2$ modulo~2.
But the product of any two of our label polynomials (distinct or not)
is a polynomial of the form $f$, $\x_if$, or $\x_i\x_jf$, since $f^2=f$;
all of these polynomials have degree less than~$\n$ and hence
have even weight.  So all dot products of rows of~$T$ are~$0$.

This shows that the row space of~$T$ has dimension $\n+1$ and
codimension at least~$\n+1$, so the dimension of the ambient space
(i.e., the number of columns of~$T$) is at least $2\n+2$.

Now let $T'$ be $T$ with its all-0 columns (those labeled $v$
where $f(v)=0$) deleted.  The all-0 columns do not affect the
independence or orthogonality of rows, so the arguments above
show that the number of columns of~$T'$ is at least $2\n+2$.
The columns of~$T'$ are labeled with~$v$ such that $f(v)=1$,
so the number of columns of~$T'$ is the weight of~$f$,
and we are done.
\end{proof}

This result gives a lower bound on the weight of a cocubic polynomial $f$,
but we are interested in the minimum weight of $f$.
Thus, we will have to consider polynomials $f+p$ where $p$ has
degree at most $\n-4$.  In this case, instead of trying to show
that all such polynomials $f+p$ have no affine factor, it will be
more convenient to impose a linear independence assumption directly
(especially since verifying that assumption is probably the most
convenient way to prove that $f$~has no affine factor anyway).

\begin{theorem}
\label{thm:minwt}
If a Boolean polynomial~$f$ of degree $\n-3$ in $\n$ variables
has the property that the degree-$(\n{-}2)$ parts of the products $\x_if$
($i=1,2,\dots,\n$) are linearly independent,
then $f+\RM(\n-4,\n)$~has minimum weight at least $2\n+2$.
\end{theorem}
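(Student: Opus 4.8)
The plan is to rerun the matrix argument of Theorem~\ref{thm:weight}, but applied to an \emph{arbitrary} element of the coset $f+\RM(\n-4,\n)$ rather than to~$f$ itself, with the linear-independence hypothesis playing the role that ``$f$~has no affine factor'' played before. Concretely, fix an arbitrary $g\in f+\RM(\n-4,\n)$, write $g=f+p$ with $\deg p\le \n-4$, and note that $\deg g=\n-3$; in particular $g\ne 0$, since $g=0$ would force $\deg f\le \n-4$. Form the same $(\n+1)\times 2^{\n}$ matrix~$T$ as in the earlier proof, whose rows are the joint truth table of $g,\x_1 g,\dots,\x_\n g$. If I can re-establish (a)~linear independence of the rows and (b)~self-orthogonality of the row space, then the rest goes through verbatim: the row space is an $(\n{+}1)$-dimensional self-orthogonal subspace, so the number of columns is at least $2\n+2$, and after deleting the all-0 columns (those $v$ with $g(v)=0$, which are all-0 in every row because $\x_i g$ vanishes wherever $g$ does) the number of surviving columns, namely $\wt(g)$, is still at least $2\n+2$. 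Since $g$ was arbitrary this gives the claimed lower bound on the minimum weight.

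For (a), suppose $c_0 g+\sum_{i=1}^{\n} c_i\,\x_i g=0$. Since $\deg g=\n-3$, the term $c_0 g$ has degree at most $\n-3$ while each $\x_i g$ has degree at most $\n-2$, so extracting the homogeneous degree-$(\n{-}2)$ component of this relation kills the $c_0 g$ term and leaves a linear dependence among the degree-$(\n{-}2)$ parts of the $\x_i g$. But the degree-$(\n{-}2)$ part of $\x_i g=\x_i f+\x_i p$ equals that of $\x_i f$, because $\deg(\x_i p)\le \n-3$; so the hypothesis forces $c_1=\dots=c_\n=0$, and then $c_0 g=0$ with $g\ne 0$ forces $c_0=0$. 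For (b), the dot product of the rows labeled $g_1$ and $g_2$ equals $\wt(g_1 g_2)\bmod 2$, and every product of two label polynomials, reduced using $g^2=g$, is one of $g$, $\x_i g$, $\x_i\x_j g$ --- each of degree strictly less than~$\n$, hence of even weight --- so all such dot products vanish. This step is identical to the corresponding step of Theorem~\ref{thm:weight}.

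I do not expect a genuine obstacle here; the only point needing care is the degree bookkeeping, namely that $\n-3<\n-2$, so that $g$ contributes nothing to the degree-$(\n{-}2)$ component and the hypothesis is stated exactly for the polynomials that survive, and likewise that $\x_i p$ contributes nothing in degree $\n-2$. (It is worth observing in passing that this hypothesis is at least as strong as ``$f$~has no affine factor'': the same extraction argument with $p=0$ recovers linear independence of $f,\x_1 f,\dots,\x_\n f$, which by the proof of Theorem~\ref{thm:weight} is equivalent to $f$ having no affine factor.) Everything else is a direct reprise of the preceding proof.
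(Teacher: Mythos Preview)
Your proof is correct and follows essentially the same approach as the paper: you form the truth-table matrix with rows $g,\x_1g,\dots,\x_\n g$, verify linear independence by extracting the degree-$(\n{-}2)$ component (noting that $\x_ip$ contributes nothing there), verify pairwise orthogonality via the degree bound on products, and conclude by the self-orthogonal-code dimension count after deleting zero columns. The paper's argument is identical in substance, differing only in phrasing the independence step as ``any nonempty sum of label polynomials is nonzero'' rather than as a vanishing linear combination.
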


\begin{proof}
We want to show that, for any polynomial~$p$ of degree at most $\n-4$,
the polynomial $f+p$ has weight at least $2\n+2$.  So construct
the matrices $T$ and $T'$ as in the proof of Theorem~\ref{thm:weight},
except we now label the rows $f+p$ and $\x_i(f+p)$ for
$i=1,2,\dots,\n$.

For any sum of one or more of the label polynomials that
includes at least one polynomial $\x_i(f+p)$, the coquadratic part
of this sum (which just involves $f$, since $\x_ip$ has degree
less than $\n-2$) is nonzero by our assumption on $f$, so the sum
is nonzero.  If a sum of one or more label polynomials does not
include any of the polynomials $\x_i(f+p)$, then the sum is
just $f+p$, which has degree $\n-3$ and is therefore nonzero.
Therefore, the label polynomials, and hence the rows of~$T$,
are linearly independent.

Any product of two of the label polynomials (distinct or not)
has the form $f+p$, $\x_i(f+p)$, or $\x_i\x_j(f+p)$; all of
these have degree less than~$\n$, so the rows of~$T$ are orthogonal
to themselves and each other.

Deleting the all-zero columns of~$T$ to produce $T'$ does not
affect either of these conclusions, so as before, the number
of columns of~$T'$ is at least twice the number of rows of~$T'$.
Therefore, the weight of~$f+p$ is at least $2\n+2$, as desired.
\end{proof}

To get lower bounds better than $2\n+2$ requires an extension of this
method, which we will demonstrate in the next section on
6-variable polynomials.

\section{The 6-variable case}
\label{sec:6var}

As we saw earlier, in order to compute the covering radius of
$\RM(2,6)$ in $\RM(3,6)$, we only need to compute the minimum
weights of a list of representative polynomials from which
every polynomial in $\RM(3,6)$ can be obtained by linear
transformations of the variables and/or adding terms of degree
less than~$3$.

Hou~\cite{Hou} lists
a complete set of
six representative cubic polynomials on
6~variables, but three of them essentially reduce
(by use of Proposition~\ref{prop:multvar}) to fewer variables:

\begin{itemize}
\item $f_1 = 0$: The zero polynomial on any number of variables
obviously has weight 0 and minimum weight 0.
\item $f_2 = \x_1\x_2\x_3$: This reduces to the case of the constant
polynomial~1 on three variables, which has weight 8 and
minimum weight 8.
\item $f_3 = \x_1\x_2\x_3+\x_2\x_4\x_5$: This reduces to the case of the
polynomial $\x_1\x_3+\x_4\x_5$ on five variables, which has weight 12
and (by Theorem~\ref{thm:minwt}) minimum weight 12.
\end{itemize}

We now consider the remaining three representative cubics.  First, let
$$f = f_4 = \x_1\x_2\x_3 + \x_4\x_5\x_6.$$
One can verify directly that $f$~has weight 14; by Theorem~\ref{thm:minwt}, $f$~has minimum weight 14.

We next look at
$$f = f_5 = \x_1\x_2\x_3 + \x_2\x_4\x_5 + \x_3\x_4\x_6.$$
This has weight 16 by direct computation.
Theorem~\ref{thm:minwt} gives a lower bound
of~14 for its minimum weight; to improve the lower bound to~16,
we will add one more row to the truth table in the proof of
Theorem~\ref{thm:weight}.

Let us list the degree-$4$ terms from $\x_if$ for this case:
\begin{align*}
   \x_1f:&\quad \x_1\x_2\x_4\x_5 + \x_1\x_3\x_4\x_6
\\ \x_2f:&\quad \x_2\x_3\x_4\x_6
\\ \x_3f:&\quad \x_2\x_3\x_4\x_5
\\ \x_4f:&\quad \x_1\x_2\x_3\x_4
\\ \x_5f:&\quad \x_1\x_2\x_3\x_5 + \x_3\x_4\x_5\x_6
\\ \x_6f:&\quad \x_1\x_2\x_3\x_6 + \x_2\x_4\x_5\x_6
\end{align*}
All of the monomials here are distinct, so these coquadratic
polynomials are linearly independent.

We now add to the truth table a new row given by $(\x_1\x_2+\x_4\x_6)f$.

First note that, when one multiplies $\x_1\x_2+\x_4\x_6$ by $f$,
the only degree-5 term that occurs is $\x_1\x_2\x_3\x_4\x_6$,
and it occurs twice and hence cancels out, so
$(\x_1\x_2+\x_4\x_6)f$ has degree~$4$.
This is important because it implies that
the truth table row for $(\x_1\x_2+\x_4\x_6)f$ is orthogonal to
itself and to the rows for $f$ and $\x_if$.

The degree-4 part of $(\x_1\x_2+\x_4\x_6)f$ is
$\x_1\x_2\x_4\x_5 + \x_2\x_4\x_5\x_6$, which is linearly independent
of the degree-4 parts of $\x_if$ listed above.  A quick way
to verify this is to note that, if $\coef{rstu}$ is the coefficient
of the monomial $\x_r\x_s\x_t\x_u$ for a polynomial we are considering,
then $\coef{1245}+\coef{1346}$ is 0 (in $\GFtwo$) for
all of the $\x_if$ but is 1 for $(\x_1\x_2+\x_4\x_6)f$.

Putting these facts together, we can follow
the argument of Theorem~\ref{thm:weight} using the matrix~$T$
augmented with the new row for~$(\x_1\x_2+\x_4\x_6)f$:
the augmented matrix has a row space of dimension~8 and
codimension at least~8, so it must have at least~16 nonempty columns.
This shows (again) that the weight of~$f$ is at least 16.

However, we cannot yet conclude that the minimum weight of~$f$
is at least~16.  For Theorem~\ref{thm:minwt}, the addition of
lower-degree terms (degree at most $\n-4$) to~$f$ did not
affect the arguments; this is no longer the case now that
we are multiplying~$f$ by a quadratic.

So suppose we instead look at
\begin{equation}
\label{eq:g}
g = f + \sum_{r{<}s} c_{rs}\x_r\x_s + \text{lower-degree terms},
\end{equation}
where the lower-degree terms (degree at most~1) will have
no effect on the following arguments.

How much of the preceding proof still goes through?  The
degree-4 parts of $\x_ig$ are the same as those of $\x_if$,
so these are still linearly independent.  The product
$(\x_1\x_2+\x_4\x_6)g$ still has degree at most 4, which takes
care of the orthogonality conditions.

This leaves the proof of linear independence, and again we use
the coefficient combination $\coef{1245}+\coef{1346}$ to
prove this.  It turns out that, for $(\x_1\x_2+\x_4\x_6)g$,
we have
$$\coef{1245}+\coef{1346} = 1 + c_{13} + c_{45}.$$
This is not general enough to handle all possibilities for~$g$,
but we at least have a partial result: if $g$~is as in~\eqref{eq:g}
with $c_{13} + c_{45} = 0$, then the weight of~$g$ is at least~16.

It turns out we can handle the case $c_{13} + c_{45} = 1$ by repeating
the entire argument with $T$ augmented by the row for
$(\x_1\x_3+\x_4\x_5)g$ instead of $(\x_1\x_2+\x_4\x_6)g$.
We again find that
$(\x_1\x_3+\x_4\x_5)f$ has degree~4, so $(\x_1\x_3+\x_4\x_5)g$
has degree at most~4 and the orthogonality conditions are met.
Of course, the $\x_ig$ are still linearly independent, so that leaves
the question of $(\x_1\x_3+\x_4\x_5)g$'s independence.
It turns out that we can show independence for $(\x_1\x_3+\x_4\x_5)g$
by looking at the monomial coefficient $\coef{1345}$.
This comes out to 0 for $\x_ig$ and
$c_{13} + c_{45}$ for $(\x_1\x_3+\x_4\x_5)g$; since we are assuming
$c_{13} + c_{45} = 1$, we have the desired independence, so
the weight of~$g$ is at least~16.

Putting these two cases together to handle all possible $g$, we see
that the minimum weight of~$f$ is at least (and hence exactly) 16.

Finally, consider
$$f = f_6 = \x_1\x_2\x_3 + \x_1\x_4\x_5 + \x_2\x_4\x_6 + \x_3\x_5\x_6 +
         \x_4\x_5\x_6.$$
Up to renaming variables, this is the polynomial shown by
Schatz~\cite{Schatz} to have minimum weight 18.  

In order to prove a lower bound of 18 on the minimum weight of~$f$,
we need to add \emph{two} rows to the truth table in Theorem~\ref{thm:minwt}.
The argument is similar in flavor to the $f_5$ case, only more complicated,
so we will omit some of the easy steps.
We again must handle all $g$ of the form~\eqref{eq:g}.

We begin by considering $T$ augmented by the pair of rows for the polynomials
$(\x_1\x_5+\x_2\x_6)g$ and $(\x_1\x_2+\x_1\x_5+\x_5\x_6)g$.
We first check that the products $(\x_1\x_5+\x_2\x_6)g$ and
$(\x_1\x_2+\x_1\x_5+\x_5\x_6)g$ have degree at most 4; this shows that
the two corresponding rows of the truth table are
self-orthogonal and orthogonal to all of the $\x_ig$ rows.
We still have to verify that these two rows
are orthogonal to each other.

For independence, we first check that the $\x_ig$ rows
have linearly independent coquadratic parts. Next, we consider the
coefficient combinations $\coef{1245}+\coef{2356}$ and
$\coef{1235}+\coef{1245}+\coef{2456}$. Both of these are zero for the
$\x_ig$. We also find: 
\begin{itemize}
\item the degree-6 coefficient of
$(\x_1\x_5+\x_2\x_6)(\x_1\x_2+\x_1\x_5+\x_5\x_6)g$ is
$c_{34}$;
\item the combination $\coef{1245}+\coef{2356}$ for $(\x_1\x_5+\x_2\x_6)g$
is $1+c_{24}+c_{35}$;
\item the combination $\coef{1235}+\coef{1245}+\coef{2456}$ for
$(\x_1\x_5+\x_2\x_6)g$ is\newline $c_{23}+c_{24}+c_{45}$; and
\item the combination $\coef{1235}+\coef{1245}+\coef{2456}$ for
$(\x_1\x_2+\x_1\x_5+\x_5\x_6)g$ is $1+c_{23}+c_{35}+c_{45}$.
\end{itemize}
Thus, the two new rows will be orthogonal to each other
if $c_{34}=0$;
$(\x_1\x_5+\x_2\x_6)g$ will be independent of the $\x_ig$ if
$c_{24}+c_{35}=0$; and $(\x_1\x_2+\x_1\x_5+\x_5\x_6)g$ will be
independent of the $\x_ig$ and $(\x_1\x_5+\x_2\x_6)g$ if
$c_{23}+c_{35}+c_{45}=0$ and $c_{23}+c_{24}+c_{45}=0$.

Putting these pieces together and noting that
$$c_{23}+c_{35}+c_{45} = (c_{24}+c_{35}) + (c_{23}+c_{24}+c_{45}),$$
we see that $g$ must have weight at least 18 so long as
\begin{equation}
\label{eq:cond6a}
\begin{aligned}
         c_{34}&=0,
\\         c_{24}+c_{35}&=0,\text{ and}
\\         c_{23}+c_{24}+c_{45}&=0.
\end{aligned}
\end{equation}
To handle the remaining possibilities for $g$, we need to use
different quadratic multipliers and/or coefficient combinations.

For the next case, we may repeat the same arguments using the
quadratics $\x_1\x_6$ and $\x_1\x_5+\x_2\x_6$
and the coefficient combinations $\coef{1346}$
and $\coef{1234}+\coef{1345}+\coef{3456}$.
Then, we find:
\begin{itemize}
\item the degree-6 coefficient of
$\x_1\x_6(\x_1\x_5+\x_2\x_6)g$ is
$0$;
\item the coefficient $\coef{1346}$ for $\x_1\x_6g$
is $c_{34}$;
\item the combination $\coef{1234}+\coef{1345}+\coef{3456}$ for
$\x_1\x_6g$ is $0$; and
\item the combination $\coef{1234}+\coef{1345}+\coef{3456}$ for
$(\x_1\x_5+\x_2\x_6)g$ is $c_{34}$.
\end{itemize}
Hence, $g$ must have weight at least~18 if
\begin{equation}
\label{eq:cond6b}
c_{34}=1.
\end{equation}

A similar computation using quadratics $\x_1\x_6$ and $\x_1\x_5+\x_2\x_6$
(the same as the preced\-ing case)
and coefficient combinations $\coef{1236}+\coef{1246}+\coef{1456}$
and $\coef{1235}+\coef{1245}+\coef{2456}$
shows that $g$ must have weight at least~18 if it satisfies
\begin{equation}
\label{eq:cond6c}
c_{23}+c_{24}+c_{45}=1.
\end{equation}
Yet another such computation using quadratics $\x_1\x_6$ and
$\x_1\x_2+\x_1\x_5+\x_5\x_6$
and coefficient combinations $\coef{1246}+\coef{1356}$
and $\coef{1245}+\coef{2356}$
shows that $g$ must have weight at least~18 if
\begin{equation}
\label{eq:cond6d}
\begin{aligned}
         c_{24}+c_{35}&=1\text{ and}
\\         c_{23}+c_{24}+c_{45}&=0.
\end{aligned}
\end{equation}

It is easy to see that, no matter what~$g$ is, at least one of
the conditions \eqref{eq:cond6a}, \eqref{eq:cond6b}, \eqref{eq:cond6c},
and \eqref{eq:cond6d} must hold, so $g$ has weight at least 18.
Therefore, the minimum weight of~$f$ is 18.

The results in this section give a new noncomputer proof that the
covering radius of $\RM(2,6)$ in $\RM(3,6)$ is 18.

\section{Certificates}
\label{sec:certificates}

For a proof of the form given for $f_5$ and $f_6$ in the previous section,
it should be clear that \emph{verifying} such a proof may be much easier
than \emph{finding} it. While verifying those two proofs \emph{may} be
within grasp of a human, some of our later results are certainly not.
However, the skeptical reader does not simply have to trust us, as with
previous computational results mentioned herein. By conveying all the
necessary pieces for these proofs in a \emph{certificate}, the reader may
verify our claims with a relatively modest computation.

Here is a brief summary of the proofs. We start with a cocubic polynomial
$f$ in $\n$ variables, and we want to show that it has minimum weight $2\n+2+2k$ for some
$k$. We thus must show that every $g$ with the same cocubic terms as $f$
must have weight at least $2\n+2+2k$. Then, we divide the proof into a
number of cases. For each case, we choose $k$ quadratic multiples of $g$
and append their truth tables to $T$, as defined in
Theorem~\ref{thm:weight}. Provided the coefficients of $g$ satisfy
certain constraints, we can prove that all the rows of $T$ are linearly
independent and orthogonal, which establishes the bound on all such $g$.
When all of the cases, taken together, cover all possible $g$, our proof
is complete.

We now formalize the method by defining a \textit{level-$k$ certificate}
that contains all the necessary data to verify such a proof.

\begin{definition}
\label{def:cert}
Let $f$ be a Boolean polynomial of degree~$\n-3$ on $\n$~variables
$\x_1,\dots,\x_{\n}$.
A \textit{level-$k$ certificate} for~$f$ is a sequence of triples
$\langle C,q,r \rangle$ (which we will often refer to as ``subproofs'') where:
\begin{itemize}
\item $C$ is a list of affine equations on $\GFtwo$ variables
$c_M$ where $M$~is a monomial of degree $\n-4$
($C$~will be viewed as a condition to be assumed on the values
of these variables);
\item $q = [q_1,\dots,q_k]$ is a sequence of $k$ quadratics;
\item $r = [r_1,\dots,r_k]$ is a sequence of $k$ linear combinations of
coefficient specifiers $\coef{M}$ where $M$~is a monomial of degree $\n-2$;
\end{itemize}
meeting the following requirements:
\begin{enumerate}
\item
\label{item:indepoldold}
The degree-$(\n{-}2)$ parts of the polynomials $\x_1f,\dots,\x_{\n}f$
are linearly independent.  (This says nothing about the certificate;
it is a restriction on the polynomial~$f$.)
\item
\label{item:orthogoldnew}
For each triple $\langle C,q,r \rangle$ and each $j \le k$,
$q_jf$ has degree at most $\n-2$.
\item
\label{item:orthognewnew}
For each triple $\langle C,q,r \rangle$ and each pair $j' < j \le k$,
if condition~$C$ holds, then the coefficient of $\x_1\x_2\cdots\x_{\n}$
in $q_{j'}q_{j}(f+\sum_Mc_MM)$ is 0.
\item
\label{item:indepoldnew}
For each triple $\langle C,q,r \rangle$ and each $i \le \n$ and $j \le k$,
the coefficient combination~$r_j$ for~$\x_if$ is 0.
\item
\label{item:indepnewnew}
For each triple $\langle C,q,r \rangle$ and each pair $j' \le j \le k$,
if condition~$C$ holds, then the coefficient combination $r_j$
for $q_{j'}(f+\sum_Mc_MM)$ is 0 if $j'<j$ but is 1 if $j' = j$.
\item
\label{item:fcs}
The affine subspaces (flats) defined by the conditions~$C$ for all
triples $\langle C,q,r \rangle$ in the sequence cover the entire
space of possible values of the variables~$c_M$ (i.e.,
any assignment to these variables satisfies at least one
condition~$C$).
\end{enumerate}
\end{definition}

In particular, a simple level-0 certificate is just a single triple of
three null sequences; to verify that it is in fact a certificate
for~$f$, one just has to check requirement~\ref{item:indepoldold} above.
This is a certificate for~$f_4$.

Based on the arguments in the preceding section,
a level-1 certificate for~$f_5$ is
\begin{align*}
&      \langle [c_{13}+c_{45}=0],[\x_1\x_2+\x_4\x_6],
           [\coef{1245}+\coef{1346}]\rangle,
\\&      \langle [c_{13}+c_{45}=1],[\x_1\x_3+\x_4\x_5],[\coef{1345}]\rangle,
\end{align*}
and a level-2 certificate for~$f_6$ is
\begin{align*}
&
      \langle [
         c_{34}=0,
         c_{24}+c_{35}=0,
         c_{23}+c_{24}+c_{45}=0
      ], [ \x_1\x_5+\x_2\x_6, \x_1\x_2+\x_1\x_5+\x_5\x_6 ],
\\&\qquad\qquad [
         \coef{1245}+\coef{2356},
         \coef{1235}+\coef{1245}+\coef{2456}
      ]\rangle,
\\&
      \langle [
         c_{34}=1
      ], [ \x_1\x_6, \x_1\x_5+\x_2\x_6 ],
\\&\qquad\qquad [
         \coef{1346},
         \coef{1234}+\coef{1345}+\coef{3456}
      ]\rangle,
\\&
      \langle [
         c_{23}+c_{24}+c_{45}=1
      ], [ \x_1\x_6, \x_1\x_5+\x_2\x_6 ],
\\&\qquad\qquad [
         \coef{1236}+\coef{1246}+\coef{1456},
         \coef{1235}+\coef{1245}+\coef{2456}
      ]\rangle,
\\&
      \langle [
         c_{24}+c_{35}=1,
         c_{23}+c_{24}+c_{45}=0
      ], [ \x_1\x_6, \x_1\x_2+\x_1\x_5+\x_5\x_6 ],
\\&\qquad\qquad [
         \coef{1246}+\coef{1356},
         \coef{1245}+\coef{2356}
      ]\rangle.
\end{align*}
Of course, we have abbreviated the notation by writing, say,
$c_{34}$ instead of $c_{\x_3\x_4}$ and $\coef{1245}$ instead of
$\coef{\x_1\x_2\x_4\x_5}$.

The general result is:

\begin{theorem}
\label{thm:cert}
If $f$ is a Boolean polynomial on $\n$~variables of degree $\n-3$ that has
a level-$k$ certificate, then the minimum weight of $f+\RM(\n-4,\n)$
is at least $2\n+2k+2$.
\end{theorem}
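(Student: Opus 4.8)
The plan is to mimic the argument given in Section~\ref{sec:6var} for $f_5$ and $f_6$, but in full generality, following the structure of Definition~\ref{def:cert}. Fix a polynomial $g$ of degree at most $\n-3$ whose degree-$(\n-3)$ part equals that of $f$; write $g = f + \sum_M c_M M + (\text{terms of degree} \le 1)$, where $M$ ranges over degree-$(\n-4)$ monomials. We must show $\wt(g) \ge 2\n + 2k + 2$. By requirement~\ref{item:fcs}, the values $(c_M)$ satisfy the condition $C$ of at least one subproof $\langle C, q, r\rangle$; fix such a subproof. The goal is to build, as in Theorem~\ref{thm:weight}, an $(\n+1+k) \times 2^\n$ matrix $T$ over $\GFtwo$ whose rows are the truth tables of $g, \x_1 g, \dots, \x_\n g, q_1 g, \dots, q_k g$, show that these $\n+1+k$ rows are linearly independent and pairwise orthogonal (including each with itself), and then conclude as before: deleting the all-zero columns leaves $T'$ with at least $2(\n+1+k)$ nonzero columns, each indexed by a $v$ with $g(v)=1$, so $\wt(g) \ge 2\n + 2k + 2$.

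The work splits into orthogonality and independence. For orthogonality: the dot product of rows labeled $h_1$ and $h_2$ is $\wt(h_1 h_2) \bmod 2$, which is $0$ whenever $h_1 h_2$ has degree less than $\n$ (using the stated fact that a polynomial in $\n$ variables has even weight iff its degree is less than $\n$, together with $\x_i^2 = \x_i$). Products among $g$ and the $\x_i g$ have degree at most $(\n-3)+2 = \n-1 < \n$. For a product $q_j g \cdot q_{j'} g = q_j q_{j'} g$ with $j' < j$: write $g = (f + \sum_M c_M M) + (\text{linear})$; the linear part times $q_j q_{j'} (\deg 4)$ has degree $\le 5 < \n$ for $\n \ge 6$ (and small cases are covered directly), so the degree-$\n$ coefficient of $q_j q_{j'} g$ equals that of $q_j q_{j'}(f + \sum_M c_M M)$, which is $0$ by requirement~\ref{item:orthognewnew} since $C$ holds. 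The self-products $q_j g \cdot q_j g = q_j g$: by requirement~\ref{item:orthogoldnew}, $q_j f$ has degree $\le \n-2$, and adding $q_j(\sum_M c_M M + \text{linear})$ raises the degree to at most $\max(\n-2, 4+(\n-4), 2+1) = \n-2 < \n$; and a product $\x_i g \cdot q_j g = \x_i q_j g$ likewise has degree $\le \n-1 < \n$. So all rows of $T$ are self-orthogonal and mutually orthogonal.

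For linear independence, suppose a nonempty subset of the rows sums to zero. If the subset uses no $q_j g$ and at least one $\x_i g$, the degree-$(\n-2)$ part of the sum is a nontrivial combination of the degree-$(\n-2)$ parts of the $\x_i f$ (the $\x_i(\sum c_M M)$ contribute degree $\le \n-3$), which is nonzero by requirement~\ref{item:indepoldold}. If the subset is just $\{g\}$, then $g$ has degree $\n-3$ and is nonzero. The remaining case is a combination involving some $q_j g$; let $j$ be the largest index with $q_j g$ in the subset. Apply the linear functional ``coefficient combination $r_j$'' (a fixed $\GFtwo$-linear map on coefficients of degree-$(\n-2)$ monomials) to the sum. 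By requirement~\ref{item:indepoldnew}, $r_j$ vanishes on each $\x_i f$, hence on each $\x_i g$ (the correction terms have too-low degree to affect degree-$(\n-2)$ coefficients); by requirement~\ref{item:indepnewnew}, $r_j$ applied to $q_{j'} g = q_{j'}(f + \sum c_M M) + q_{j'}(\text{linear})$ — again the linear correction has degree $\le 3 < \n-2$ for $\n \ge 6$ — equals $0$ for $j' < j$ and $1$ for $j' = j$; and $r_j$ applied to $g$ itself is $0$ since $g$ has degree $\n-3 < \n-2$. Thus $r_j$ of the whole sum equals $1 \ne 0$, contradicting that the sum is zero. Hence the rows of $T$ are independent, and the argument of Theorem~\ref{thm:weight} finishes the proof.

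The only delicate points I anticipate are bookkeeping ones: keeping track of exactly which low-degree correction terms can perturb the degree-$(\n-2)$ and degree-$\n$ coefficients that requirements~\ref{item:orthognewnew}--\ref{item:indepnewnew} control, and handling small values of $\n$ (here $\n = 6$) where some of the ``degree $\le 5 < \n$'' slack inequalities need the linear part to have degree at most $1$ (as stipulated in Definition~\ref{def:cert}) rather than a generic bound; once one checks that the linear terms genuinely cannot reach the relevant degrees, everything reduces to the three requirements of the certificate exactly as in the $f_5$ and $f_6$ examples. There is no real ``hard part''; the content is entirely in the certificate's defining conditions, and the theorem is the assertion that those conditions were chosen correctly.
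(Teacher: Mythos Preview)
Your approach is essentially identical to the paper's, and the overall structure is correct. However, there is one real slip in the decomposition of $g$: you write the lower-order part as ``terms of degree $\le 1$,'' but that is only true when $\n = 6$ (the case treated in Section~\ref{sec:6var}). For general $\n$, an element $g \in f + \RM(\n-4,\n)$ has the form $g = f + \sum_M c_M M + (\text{terms of degree at most } \n-5)$; nothing in Definition~\ref{def:cert} restricts these terms to degree $\le 1$. As written, your argument does not cover, e.g., a cubic correction term when $\n = 8$, so the proof is incomplete for $\n > 6$.

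Fortunately, the fix makes the degree bookkeeping cleaner rather than harder: the lower-degree part times $q_{j'} q_j$ has degree at most $(\n-5)+4 = \n-1 < \n$, and times a single $q_{j'}$ has degree at most $(\n-5)+2 = \n-3 < \n-2$. Hence neither the degree-$\n$ coefficient (for requirement~\ref{item:orthognewnew}) nor the degree-$(\n-2)$ coefficients (for requirements~\ref{item:indepoldnew} and~\ref{item:indepnewnew}) can be affected by the lower-degree part, and no separate treatment of small $\n$ is needed at all. (Also, in your self-product bound the ``$4+(\n-4)$'' should read ``$2+(\n-4)$''; your stated conclusion $\n-2$ is nonetheless correct.)
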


\begin{proof}
Fix a level-$k$ certificate for~$f$.
Let $g$ be an arbitrary member of the coset $f+\RM(\n-4,\n)$; we must show
that $g$ has weight at least~$2\n+2k+2$.  Write $g$ in the form
\begin{equation}
\label{eq:g2}
g = f + \sum_M c_M M + \text{lower-degree terms},
\end{equation}
where $M$ varies over all coquartic monomials.  By certificate
requirement~\ref{item:fcs}, there is a triple
$\langle C,q,r\rangle$ in the certificate such that the
condition~$C$ is satisfied by this~$g$.  Now form a truth table
whose rows are given by $g$, $\x_ig$ for $1\le i\le\n$,
and $q_jg$ for $1 \le j \le k$.  As in
Theorem~\ref{thm:weight} and Theorem~\ref{thm:minwt}, the first
$\n+1$ rows of this table are linearly independent
(using requirement~\ref{item:indepoldold}), self-orthogonal,
and orthogonal to each other.  The last $k$ rows
are self-orthogonal since $q_jg$ has degree less than~$\n$;
these rows are orthogonal to each other by
requirement~\ref{item:orthognewnew} and orthogonal to
the first $\n+1$ rows by requirement~\ref{item:orthogoldnew}.
Requirements \ref{item:indepoldnew} and~\ref{item:indepnewnew}
ensure that each row $q_jg$ is independent of the rows $g$ and
$\x_ig$ and the preceding rows $q_{j'}g$.  Therefore,
the row space of the table has dimension $\n+k+1$ and
codimension at least $\n+k+1$, so the dimension of the ambient
space is at least $2\n+2k+2$.  But all of this goes through
if we delete the entirely-0 columns of the table (where
$g$ is 0), leaving an ambient space whose dimension is
the weight of $g$, so the weight of $g$ is at least $2\n+2k+2$,
as desired.
\end{proof}

\begin{remark}
If a cocubic polynomial $f$ does not satisfy
requirement~\ref{item:indepoldold}, then there
is a nonzero linear polynomial $a = \x_{i_1}+\dots+\x_{i_j}$
such that $af$ has degree less than $\n-2$.  We can apply a linear
transformation of the variables that sends $a$ to $\x_{\n}$;
this transformation maps $f$ to a cocubic polynomial~$f'$ such
that $\x_\n f'$ has degree less than~$\n-2$. It follows that
the degree-$(\n{-}3)$ part of $f'$ is of the form $\x_\n p$
where $p \in \RM(\n-4,\n-1)$.
Polynomials $f$ and~$\x_\n p$ have the same minimum weight,
and one may apply
Proposition~\ref{prop:multvar} to see that the minimum weight of $\x_\n p$ is
equal to the minimum weight of~$p$.
If $p$~still does not satisfy requirement~\ref{item:indepoldold},
we can repeat this process,
eventually obtaining a polynomial $q$ satisfying
requirement~\ref{item:indepoldold} whose minimum weight is equal to the
minimum weight of $f$. A certificate for~$q$ may be used in lieu of a
certificate for $f$.  (Note that, if $f$ and $\x_\n p$ above are homogeneous
cocubics, then $\comppoly{f}$ and $\comppoly{(\x_\n p)}$ are homogeneous
cubics that are linearly equivalent modulo $\RM(2,\n)$, and
$\comppoly{(\x_\n p)}$ does not use variable~$\x_\n$.  Therefore,
if $f$ is a homogeneous cocubic on $\n$ variables such that $\comppoly{f}$ is
known not to be linearly equivalent modulo $\RM(2,\n)$ to a polynomial
on fewer variables, then $f$ must satisfy requirement~\ref{item:indepoldold}.)
\end{remark}

\section{Algorithms for verifying and producing certificates}
\label{sec:certalgs}

Checking the validity of a given certificate entails
verifying six requirements.  All of these verifications
are straightforward except for requirement~\ref{item:fcs}:
the affine subspaces or flats of the subproofs cover
the space of coquartic coefficients.  
(This space is usually far too large to simply exhaust over.)
This is like a validity problem
(the dual of the satisfiability or SAT problem) for a proposition
in conjunctive normal form, so we can borrow a technique from
SAT solvers to solve it.

Namely, we start by making an assumption (an affine equation)
and using it to eliminate a variable from all of the flats;
this yields a reduced problem on one fewer variable, where each
new flat either has one fewer equation than the corresponding
old flat (if the assumption actually followed from the equations
defining the flat), has the same number of equations (if the
assumption is independent of the flat equations), or is
discarded altogether (if the assumption contradicts the flat equations).
We apply the algorithm recursively to this reduced problem.
Then, if that was successful, we apply the algorithm to the
other reduced problem obtained by assuming the negation of the
original assumption.  If that also succeeds, then the verification
is complete.

The efficiency of this recursive
procedure depends heavily on how well each assumption is chosen.
We currently look at the flats with the smallest number
of equations and try to choose an assumption that occurs in
as many of them as possible (preferably both positively and negatively).

What if such a verification fails?  Failure occurs when,
within some number of recursive calls, a state is reached
where we have no flats remaining because all of the flats
have been contradicted by current assumptions.  This means
that the set of current assumptions defines a flat (call it
a ``failure flat'') that is disjoint from all of the flats
in the purported certificate.

While the verification procedure above is somewhat fuzzy, our
procedure for \emph{producing} a level-$k$ certificate is really
more art than science. As with SAT solvers, it is difficult to
settle on one best technique, and it is probably the case that
different strategies work better for different classes of
polynomials $f$. However, we will try to describe our general
strategy below.

First, generate a list of quadratics~$q$ such that
$qf$ has degree at most $\n-2$.  There will be many such
quadratics; the map sending a quadratic~$q$ to the
degree-$(\n{-}1)$ part of $qf$ is a linear map from a space
of dimension $\n(\n-1)/2$ to a space of dimension~$\n$, so its
kernel, the set of suitable quadratics~$q$, has dimension
at least $\n(\n-3)/2$.  In fact, this will be too many to work with
conveniently; it will help to initially restrict to a smaller
set of quadratics, such as a basis of the kernel.  (If we only
had independence requirements, then we could always limit ourselves
to such a basis.  But orthogonality considerations may require
the use of non-basis quadratics; it may be that $q_1g$ and $q_2g$
are not orthogonal, and $q_1g$ and~$q_3g$ are not orthogonal,
but $q_1g$ and $(q_2+q_3)g$ are orthogonal.)

Next, generate new subproofs.  Given a current list of subproofs
(which is initially empty), check whether their flats cover
the coquartic space.  If they do, then we are done; if they don't,
then we get a failure flat, from which we can select a coquartic~$p$.
Consider the particular polynomial $f+p$, and try to find
quadratics $q_1,\dots,q_k$ from our list such that the products
$q_j(f+p)$ are orthogonal to each other and (along with the
products $\x_if$) have linearly independent coquadratic parts.
If we have such quadratics,
then we can select coefficient specifier combinations $r_1,\dots,r_k$
that demonstrate the linear independence.  Now we can generalize from
$f+p$ to $g$, as in~\eqref{eq:g2}, and determine which equations~$C$
on the coefficients~$c_M$ will ensure that certificate requirements
\ref{item:orthognewnew} and~\ref{item:indepnewnew} are satisfied.

There are usually numerous ways to choose $q_1,\dots,q_k$ and
$r_1,\dots,r_k$ as above; we can search through these to find one
for which the success condition~$C$ requires as few equations
as possible, and therefore the corresponding flat covers as much of
the coquartic space as possible.  (This search is one reason why
we prefer to work with a relatively short list of quadratics.)
The result is a triple $\langle C,q,r\rangle$ that we can
add to our list of subproofs; then we start a new iteration,
and continue until the proof is complete or we find a coquartic~$p$
that cannot be handled as above.  (There are various refinements that
can speed this up, such as concentrating on one failure flat and producing
several new subproofs covering parts of it before starting a
new full iteration of the algorithm.)

If we find a coquartic~$p$ for which there is no suitable list
$q_1,\dots,q_k$, then one of three things has occurred:
\begin{itemize}
\item Our current list of quadratics is insufficient; add one or
more quadratics to handle this coquartic and resume the algorithm.
\item The lower bound we are trying to prove isn't true;
see Section~\ref{sec:upper}.
\item The lower bound we are trying to prove is true,
but there is no certificate proving it; see Section~\ref{sec:conclusion}.
\end{itemize}

If the algorithm succeeds in producing a certificate, then this certificate
is likely to have a large number of redundant subproofs, because the
flat for a particular subproof may end up being covered by flats from
earlier and/or later subproofs.  A final pass to remove such
redundant subproofs can result in a considerably shorter certificate.
But this final pass may be quite time-consuming; also, surprisingly,
the shorter certificate may require more time to verify than the
longer certificate, possibly because the heuristic for choosing assumptions
does not perform as well when presented with less information.
(Certificates using fewer quadratics tend to reduce better.)

\section{Algorithms for upper bounds}
\label{sec:upper}

Certifying an upper bound~$w$ on the minimum weight of a cocubic
polynomial~$f$ is easy:
just produce a coquartic~$p$ and verify that the weight of $f+p$
is at most~$w$.  This leaves the problem of finding a good~$p$.

One simple approach is a basic hill climb: start with $f$ (or any other
polynomial in its coset), and try adding monomials of degree at
most~$\n-4$ to it in the hopes of finding a new polynomial of
lower weight than the given one; if successful, start at this new
polynomial and repeat the process.  This method tends to succeed if
the original~$f$ has very few terms, but is less likely to work in
other cases.

This leads to another approach: try to modify $f$ by using linear
transformations of the variables, in the hope of getting another
polynomial~$f'$ such that it is easier to find a low-weight member
of $f'+\RM(\n-4,\n)$.  One can view this as a second hill climb,
this time using linear transformations as the steps (so a basic step
might be swapping two variables or adding one variable to another)
in order to optimize some property of the resulting~$f'$
(which might be the number of distinct variables, the number of terms,
some other property, or a combination of these); of course, one will
probably want to throw away any terms of degree less than $\n-3$
that are produced during this process.
Each polynomial~$f'$ encountered during this second hill climb can be
tried as a starting point for the first hill climb; if a low-weight
polynomial $f'+p' \in f'+\RM(\n-4,\n)$ is found in this way,
and if $f'$ is $f\circ h$ (or the degree-$(\n{-}3)$ part of
$f \circ h$) for some $h \in GL(\n,2)$,
then $(f'+p')\circ h^{-1}$ is a low-weight member of $f+\RM(\n-4,\n)$.

A third approach is to try to prove that the minimum weight of~$f$
is greater than~$w$ and see where the attempt fails.  Let $k=w/2 - \n$,
and apply the algorithm from Section~\ref{sec:certalgs} to try to
produce a level-$k$ certificate for~$f$.  It could be that
$f$ ``almost'' has minimum weight greater than~$w$, meaning that
the flats from the subproofs one can produce cover almost all
of the coquartic space.  The coquartics that are not covered
by these flats are the only possible degree-$(\n{-}4)$ parts
for coquartic polynomials~$p$ such that $f+p$ has weight at most~$w$.
If the number of such coquartics is small enough, then one can try
adding $f$ to each of them and running the first hill climb
(modified so as to alter only terms of degree less than~$\n-4$)
starting from each such sum.  We used this method to improve the
upper bounds on the minimum weights of two of the polynomials
in Section~\ref{sec:78var}.  Note that, even if we fix the
coquartic terms, there will probably be too many polynomials of
degree less than $\n-4$ to run a full exhaust on the remaining
part of the coset $f+\RM(\n-4,\n)$.

\section{Polynomials in 7 or 8 variables}
\label{sec:78var}

Hou~\cite{Hou} provides lists of cubic polynomials on up to
8 variables that include a representative from each equivalence
class of $\RM(3,\n)/\RM(2,\n)$ under the action of
$GL(\n,2)$ for $\n \le 8$; in addition to the polynomials
$f_1,\dots,f_6$ given earlier, there are polynomials
$f_7,\dots,f_{12}$ on 7~variables and $f_{13},\dots,f_{32}$
on 8~variables.  We take the complementary forms to get
a representative list $\comppoly{f_1},\dots,\comppoly{f_{12}}$
of 7-variable cocubic polynomials and a representative list
$\comppoly{f_1},\dots,\comppoly{f_{32}}$
of 8-variable cocubic polynomials.  (Note that $f_1,\dots,f_6$
are all equivalent to their 6-variable complements
under a permutation of variables, so we do not have to redo the work
of Section~\ref{sec:6var} in complementary form.)

By Proposition~\ref{prop:multvar}, we only need to handle the
polynomials $\comppoly{f_7},\dots,\comppoly{f_{12}}$ for
7~variables, and then the polynomials
$\comppoly{f_{13}},\dots,\comppoly{f_{32}}$ for 8~variables.

\begin{table}[ht]
\begin{center}
\begin{tabular}{|c|c|c|c|c|c|}
   \hline
   Polynomial & Weight & Min.\ weight & Closest in $RM(3,7)$ & Cert.\ level &
        Subproofs
   \\ \hline
   $\comppoly{f_7}$ & 20 & 16 & $\x_1\x_2\x_3$ & 0 & 1
   \\ $\comppoly{f_8}$ & 18 & 16 & $\x_1\x_2\x_3$ & 0 & 1
   \\ $\comppoly{f_9}$ & 20 & 20 & 0 & 2 & 12
   \\ $\comppoly{f_{10}}$ & 18 & 18 & 0 & 1 & 2
   \\ $\comppoly{f_{11}}$ & 22 & 20 &
         $\begin{aligned} &\x_1\x_2\x_7+\x_1\x_5\x_7
            \\[-5pt]&\quad{}+\x_1\x_6\x_7+\x_4\x_6\x_7
         \end{aligned}$ & 2 & 29
   \\ $\comppoly{f_{12}}$ & 24 & 20 & $\x_3\x_4\x_5$ & 2 & 13
   \\ \hline
\end{tabular}
\caption{Minimum weights of cocubics on 7 variables}
\label{7vartable}
\end{center}
\end{table}

Table~\ref{7vartable} shows the results obtained for
polynomials $\comppoly{f_7},\dots,\comppoly{f_{12}}$.
It shows the weight and minimum weight for each polynomial.
One can verify that the specified value is an upper bound for
the minimum weight by checking that the weight of $\comppoly{f_i}+p_i$
is the specified value, where $p_i$ is the polynomial in the
``Closest in $RM(3,7)$'' column.  To get lower bounds,
we produced certificates of the levels given in the ``Cert.\ level''
column; the last column gives the number of subproofs in
each certificate (after full simplification).

These computations give a new (and quickly verifiable) proof of the 
result of Wang, Tan, and Prabowo \cite{Wang-Tan-Prabowo}
that the covering radius of $\RM(3,7)$ in $\RM(4,7)$
is~20.

\begin{table}[ht]
\begin{center}
\begin{tabular}{|c|c|c|c|c|}
   \hline
   Polynomial & Weight & Min.\ weight & Cert.\ level & Subproofs
   \\ \hline
   $\comppoly{f_{13}}$ & 20 & 18 & 0 & 1
   \\ $\comppoly{f_{14}}$ & 18 & 18 & 0 & 1
   \\ $\comppoly{f_{15}}$ & 20 & 20 & 1 & 2
   \\ $\comppoly{f_{16}}$ & 22 & 20 & 1 & 2
   \\ $\comppoly{f_{17}}$ & 24 & 22 & 2 & 4
   \\ $\comppoly{f_{18}}$ & 28 & 22 & 2 & 35
   \\ $\comppoly{f_{19}}$ & 26 & 22 & 2 & 33
   \\ $\comppoly{f_{20}}$ & 30 & 22 & 2 & 20
   \\ $\comppoly{f_{21}}$ & 28 & 22 & 2 & 41
   \\ $\comppoly{f_{22}}$ & 28 & 22 & 2 & 81
   \\ $\comppoly{f_{23}}$ & 26 & 22 & 2 & 29
   \\ $\comppoly{f_{24}}$ & 28 & 22 & 2 & 52
   \\ $\comppoly{f_{25}}$ & 28 & 20 & 1 & 4
   \\ $\comppoly{f_{26}}$ & 28 & 24 & 3 & 478
   \\ $\comppoly{f_{27}}$ & 32 & 26 & 4 & 10022
   \\ $\comppoly{f_{28}}$ & 24 & 22 & 2 & 29
   \\ $\comppoly{f_{29}}$ & 20 & 20 & 1 & 4
   \\ $\comppoly{f_{30}}$ & 22 & 20 & 1 & 2
   \\ $\comppoly{f_{31}}$ & 22 & 20 & 1 & 2
   \\ $\comppoly{f_{32}}$ & 28 & 24 & 3 & 179
   \\ \hline
\end{tabular}
\caption{Minimum weights of cocubics on 8 variables}
\label{8vartable}
\end{center}
\end{table}

Table~\ref{8vartable} shows the results obtained for
polynomials $\comppoly{f_{13}},\dots,\comppoly{f_{32}}$.
The columns are the same as for Table~\ref{7vartable}, except that
the ``Closest in $\RM(4,8)$'' column has been omitted because
several of these polynomials are too large to fit.
(It may be of interest to note that, for $\comppoly{f_{20}}$
and~$\comppoly{f_{25}}$, the closest polynomial in~$\RM(4,8)$
cannot be a homogeneous coquartic; lower-degree terms are
required.  We verified this by using failed attempts to get
a level-3 certificate for~$\comppoly{f_{20}}$ and a level-2
certificate for~$\comppoly{f_{25}}$; the failures produced
explicit lists of what homogeneous coquartics could possibly
be added to these polynomials to get the desired weight,
and we checked that none of the coquartics on these lists
actually worked.)

The initial level-4 certificate for $\comppoly{f_{27}}$ took several weeks
on 90 processors to produce (although program improvements were being made
simultaneously, so repeating the computation would probably take less time).
This initial certificate had 64534 subproofs; 27~different quadratics
were used.  Later processing
simplified the certificate to 10022 subproofs.
Verifying this certificate in Magma took about 4.7~hours (almost all of which
is spent verifying certificate requirement~\ref{item:fcs}; a separate
C~program verified this one requirement in 33~minutes, while a CUDA version
running on a GPU took 4~minutes).

These computations show that the covering radius of $\RM(4,8)$ in $\RM(5,8)$
is~26.

\section{Polynomials in 9 variables}
\label{sec:9var}

We used the Brier--Langevin~\cite{Brier-Langevin-web} list
of 349 representative 9-variable cubic polynomials (actually, we produced
simplified versions of them via linear
transformations, which we will call $\BLpoly{i}$ for $1 \le i \le 349$),
and took their complementary forms to give representative
cocubic (degree-6) polynomials on 9~variables.

All have minimum weight at most 32, and all but six
(numbers 107, 148, 165, 274, 301, and~329)
have minimum weight at most 30.
(We have coquartics that can be
added to these polynomials to verify this.)

We have produced a level-4 certificate for the complementary form
$\comppoly{\BLpoly{311}}$ of the polynomial
\begin{equation*}
\BLpoly{311} =
\x_1\x_2\x_3 + \x_1\x_4\x_6 + \x_1\x_7\x_9 + \x_2\x_4\x_9 + 
    \x_2\x_5\x_6 + \x_3\x_5\x_9 + \x_6\x_8\x_9.
\end{equation*}
The original certificate had 21697 subproofs
using 33 quadratics; further computation simplified this
to 6134 subproofs.  So
the minimum weight of~$\comppoly{\BLpoly{311}}$
is at least~28; in fact, we have a coquartic that demonstrates
that this minimum weight is exactly~28.

So the covering radius of $\RM(5,9)$ in $\RM(6,9)$ is at least~$28$
and at most~$32$.  (We conjecture that the actual value is $32$;
we are just unable to produce a level-6 certificate at present.)

\section{Polynomials in more variables}
\label{sec:morevar}

We have also produced a level-4 certificate for the complementary form
of the ten-variable polynomial
\begin{equation}
\label{eq:TV}
\x_1\x_3\x_5 + \x_1\x_4\x_{10} + \x_2\x_3\x_6 + \x_2\x_4\x_8 + 
    \x_2\x_5\x_{10} + \x_3\x_4\x_7 + \x_8\x_9\x_{10}.
\end{equation}
The certificate has 65059 subproofs
using 37 quadratics (before simplification, which is in progress).  So
the minimum weight of the complement of~\eqref{eq:TV}
is at least~30; again, we have a coquartic that demonstrates
that this minimum weight is exactly~30.

We now show how to use the examples already produced
to get similar examples on higher numbers of variables.

\begin{theorem}
\label{thm:upward}
If the cocubic Boolean polynomial~$f$ on variables~$\x_1,\dots,\x_{\n}$
has a level-$k$ certificate and the cocubic polynomial~$f'$
on new variables $\y_1,\dots,\y_{\n'}$ has a level-0 certificate,
then the cocubic polynomial $f^*=\y_1\cdots\y_{\n'}f+\x_1\cdots\x_{\n}f'$
has a level-$k$ certificate.
\end{theorem}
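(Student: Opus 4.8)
The plan is to build the certificate for $f^*$ by lifting the given level-$k$ certificate for $f$ subproof by subproof, exploiting the elementary fact that multiplying a polynomial in the $\x$-variables by $\y_1\cdots\y_{\n'}$ shifts all of its degrees up by exactly $\n'$ and preserves its coefficients monomial by monomial. Write $N=\n+\n'$ for the number of variables of $f^*$; since $\deg f^*=\max\{\n'+(\n-3),\,\n+(\n'-3)\}=N-3$, the polynomial $f^*$ is cocubic on $N$ variables. Fix the identifications that send a degree-$(\n-4)$ monomial $M$ in the $\x$'s to the degree-$(N-4)$ monomial $\y_1\cdots\y_{\n'}M$, and a degree-$(\n-2)$ monomial $M$ in the $\x$'s to the degree-$(N-2)$ monomial $\y_1\cdots\y_{\n'}M$. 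For each subproof $\langle C,q,r\rangle$ of the certificate for $f$, I would emit the triple $\langle C^*,q,r^*\rangle$, where $q$ is left unchanged (its entries are quadratics in the $\x$'s, hence quadratics on all $N$ variables), $C^*$ is $C$ with each variable $c_M$ rewritten as $c_{\y_1\cdots\y_{\n'}M}$, and $r^*$ is $r$ with each coefficient specifier for a monomial $M$ rewritten as the specifier for $\y_1\cdots\y_{\n'}M$. The rest of the proof is the verification that this sequence of triples satisfies the six requirements of Definition~\ref{def:cert} for $f^*$.

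Requirement~\ref{item:indepoldold} is where both hypotheses, on $f$ and on $f'$, get used. Modulo lower-degree terms, $\x_if^*=\y_1\cdots\y_{\n'}(\x_if)+\x_1\cdots\x_{\n}f'$, whose degree-$(N-2)$ part is $\y_1\cdots\y_{\n'}$ times the degree-$(\n-2)$ part of $\x_if$ (the second summand has degree only $N-3$), while $\y_lf^*=\y_1\cdots\y_{\n'}f+\x_1\cdots\x_{\n}(\y_lf')$ has degree-$(N-2)$ part equal to $\x_1\cdots\x_{\n}$ times the degree-$(\n'-2)$ part of $\y_lf'$. Monomials of the first kind use exactly $\n-2$ of the $\x$'s together with all of the $\y$'s, those of the second kind use all of the $\x$'s together with $\n'-2$ of the $\y$'s, so the two families span disjoint sets of coordinates; independence within the first family follows from requirement~\ref{item:indepoldold} for $f$, and within the second from the level-0 certificate for $f'$. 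Requirements~\ref{item:orthogoldnew} and~\ref{item:indepoldnew} are bookkeeping: $q_jf^*$ equals $\y_1\cdots\y_{\n'}(q_jf)$ plus a polynomial of degree $\le N-3$, so it has degree $\le N-2$ because $\deg(q_jf)\le\n-2$; the combination $r^*_j$ evaluated on $\x_if^*$ equals $r_j$ evaluated on $\x_if$ and hence is $0$; and $r^*_j$ evaluated on $\y_lf^*$ is $0$ because the relevant degree-$(N-2)$ monomials of $\y_lf^*$ all contain every one of $\x_1,\dots,\x_{\n}$, whereas those named by $r^*_j$ contain only $\n-2$ of them.

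The substance of the argument is requirements~\ref{item:orthognewnew} and~\ref{item:indepnewnew}, which involve the generic polynomial $f^*+\sum_M c_M M$ with $M$ ranging over the coquartic monomials in all $N$ variables. The key observation is that, since each $q_j$ (hence each product $q_{j'}q_j$) involves only $\x$-variables, a monomial $M$ can contribute to the coefficient of $\x_1\cdots\x_{\n}\y_1\cdots\y_{\n'}$ in $q_{j'}q_j(f^*+\sum_M c_M M)$, or to the specifier for $\y_1\cdots\y_{\n'}M'$ that appears in $r^*_j$ evaluated on $q_{j'}(f^*+\sum_M c_M M)$, only when $M$ already contains all of $\y_1,\dots,\y_{\n'}$ --- that is, when $M=\y_1\cdots\y_{\n'}M''$ for some degree-$(\n-4)$ monomial $M''$ in the $\x$'s; moreover the summand $\x_1\cdots\x_{\n}f'$ of $f^*$ and its multiples by $q_{j'}$ and by $q_{j'}q_j$ all have degree $\le N-3$ and so contribute nothing to these top-degree coefficients (of degree $N$ and $N-2$ respectively). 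Cancelling the common factor $\y_1\cdots\y_{\n'}$, the coefficient of $\x_1\cdots\x_{\n}\y_1\cdots\y_{\n'}$ in $q_{j'}q_j(f^*+\sum_M c_M M)$ equals the coefficient of $\x_1\cdots\x_{\n}$ in $q_{j'}q_j\bigl(f+\sum_{M''}c_{\y_1\cdots\y_{\n'}M''}M''\bigr)$, and $r^*_j$ evaluated on $q_{j'}(f^*+\sum_M c_M M)$ equals $r_j$ evaluated on $q_{j'}\bigl(f+\sum_{M''}c_{\y_1\cdots\y_{\n'}M''}M''\bigr)$; under the identification $c_{\y_1\cdots\y_{\n'}M''}\leftrightarrow c_{M''}$ the condition $C^*$ holds precisely when $C$ does, so requirements~\ref{item:orthognewnew} and~\ref{item:indepnewnew} for $f^*$ reduce to the same requirements for $f$. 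Requirement~\ref{item:fcs} is then immediate, since each flat $C^*$ constrains only the coordinates $c_{\y_1\cdots\y_{\n'}M''}$ and leaves every other $c_M$ free, while the flats $C$ cover the space of the $c_{M''}$, so the flats $C^*$ cover the entire coquartic space. The step I expect to be most delicate is exactly this degree bookkeeping: one must check with care that no stray low-degree contribution --- from $\x_1\cdots\x_{\n}f'$ and its multiples, or from coquartic monomials $M$ not divisible by $\y_1\cdots\y_{\n'}$ --- leaks into the top-degree coefficients being tracked, after which the whole verification collapses onto the already-established requirements for $f$ (together with the level-0 certificate for $f'$ in the case of requirement~\ref{item:indepoldold}).
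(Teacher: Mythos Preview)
Your proposal is correct and matches the paper's proof essentially line for line: the same lifting $\langle C,q,r\rangle\mapsto\langle C^*,q,r^*\rangle$ via multiplication of monomial subscripts by $\y_1\cdots\y_{\n'}$, and the same verification of the six requirements by splitting according to whether a coquartic monomial is divisible by $\y_1\cdots\y_{\n'}$. The only cosmetic difference is that you package requirements~\ref{item:orthognewnew} and~\ref{item:indepnewnew} into a single ``key observation'' about the $\y$-content of the contributing monomials, whereas the paper treats them separately; the arguments are the same.
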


The proof of Theorem~\ref{thm:upward} proceeds by constructing a
new certificate for~$f^*$ and verifying that requirements
\ref{item:indepoldold}--\ref{item:fcs} hold for it; the details
are given in the Appendix.

Given an $\n$-variable polynomial~$f(\x_1,\dots,\x_{\n})$
of degree $\n-3$ with a level-$k$ certificate,
we can apply Theorem~\ref{thm:upward} with
$f'=1$ on variables $\y_1,\y_2,\y_3$ to get an $(\n+3)$-variable
polynomial~$f^*$ with a level-$k$ certificate.
Since we already have 8-, 9-, and 10-variable polynomials
with level-4 certificates, we can now generate $\n$-variable
polynomials with level-4 certificates for all $\n \ge 8$.
Therefore, Theorem~\ref{thm:cert} gives:

\begin{theorem}
The covering radius of $\RM(\n-4,\n)$
in $\RM(\n-3,\n)$ is at least $2\n+10$ for all $\n \ge 8$.
\end{theorem}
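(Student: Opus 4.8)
The plan is to bootstrap from the finitely many explicit polynomials with level-$4$ certificates produced in Sections~\ref{sec:78var}--\ref{sec:morevar} to all $\n \ge 8$, using the lifting construction of Theorem~\ref{thm:upward}, and then to read off the covering-radius bound from Theorem~\ref{thm:cert}.

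First I would check that the constant polynomial $f'=1$ on three new variables $\y_1,\y_2,\y_3$ qualifies as a cocubic with a level-$0$ certificate: it has degree $0=3-3$, and the only thing a level-$0$ certificate requires is requirement~\ref{item:indepoldold}, namely that the degree-$1$ parts of $\y_1f',\y_2f',\y_3f'$ be linearly independent; but those parts are exactly $\y_1,\y_2,\y_3$. Feeding this $f'$ into Theorem~\ref{thm:upward} then shows that whenever a cocubic polynomial on $\n$ variables has a level-$k$ certificate, the cocubic polynomial $f^*=\y_1\y_2\y_3 f+\x_1\cdots\x_\n$ on $\n+3$ variables (which indeed has degree $\n=(\n+3)-3$) also has a level-$k$ certificate.

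Next I would run an induction on $\n$ with step $3$. The base cases are already in hand: $\comppoly{f_{27}}$ on $8$ variables, $\comppoly{\BLpoly{311}}$ on $9$ variables, and the complement of~\eqref{eq:TV} on $10$ variables each come with a level-$4$ certificate. Since $\{8,9,10\}$ meets every residue class modulo $3$, repeatedly applying the lifting step above produces, for every $\n\ge 8$, a cocubic polynomial on $\n$ variables carrying a level-$4$ certificate. Applying Theorem~\ref{thm:cert} with $k=4$, the coset of such a polynomial in $\RM(\n-3,\n)/\RM(\n-4,\n)$ has minimum weight at least $2\n+2\cdot4+2=2\n+10$; since the covering radius of $\RM(\n-4,\n)$ in $\RM(\n-3,\n)$ is the supremum of the coset minimum weights, it is at least $2\n+10$, as claimed.

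There is no real obstacle here beyond bookkeeping, as Theorems~\ref{thm:cert} and~\ref{thm:upward} together with the base-case certificates do all the heavy lifting. The one point to be careful about is the residue-class count ensuring the step-$3$ induction from bases $8,9,10$ reaches every $\n\ge 8$; a secondary sanity check is that the lifted polynomial $f^*$ is again a cocubic of the correct degree so that the lifting step and Theorem~\ref{thm:cert} can be reapplied to it, which is immediate from the form $f^*=\y_1\y_2\y_3 f+\x_1\cdots\x_\n$.
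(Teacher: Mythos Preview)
Your proposal is correct and follows exactly the paper's approach: lift via Theorem~\ref{thm:upward} with $f'=1$ on three new variables (so the step is $+3$), use the explicit level-$4$ certificates at $\n=8,9,10$ as base cases covering all residues mod~$3$, and conclude via Theorem~\ref{thm:cert}. If anything, you spell out two points the paper leaves implicit, namely the verification that $f'=1$ satisfies requirement~\ref{item:indepoldold} and the residue-class bookkeeping.
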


This improves a lower bound result of McLoughlin~\cite{McLoughlin},
stating that the covering radius of $\RM(\n-4,\n)$ in $\RM(\n-3,\n)$
is at least $2\n+2$ for odd~$\n$ and $2\n$ for even~$\n$.
It is not as good asymptotically as the sphere-covering
bound~\cite{MacWilliams-Sloane}, which gives a quadratic lower bound
on this covering radius.  The most basic way to apply the
sphere-covering bound gives that
the covering radius of $\RM(\n-4,\n)$ in $\RM(\n-3,\n)$ is bounded below
by the least~$k$ such that
\begin{equation}
\label{scb1}
\sum_{i=0}^k \binom{2^\n}{i} \ge
2^{\raisebox{1ex}{\mbox{$\textstyle \binom{\n}{3}$}}}.
\end{equation}
One can improve this using the fact that the code $\RM(\n-3,\n)$
has minimum distance 8, and hence the balls of radius~3 centered at
the codewords are disjoint; this yields as a lower bound
the least $k$ such that
\begin{equation}
\label{scb2}
\sum_{i=0}^{k+3} \binom{2^\n}{i} \ge
\left( 1 + 2^\n + \binom{2^\n}{2} + \binom{2^\n}{3} \right)
2^{\raisebox{1ex}{\mbox{$\textstyle \binom{\n}{3}$}}}.
\end{equation}
These bounds exceed $2\n+10$ for $\n \ge 15$.  (However, the sphere-covering
bound is nonconstructive and does not yield explicit examples of
cosets of high minimum weight.)

\begin{remark}
Inequality~\eqref{scb2} implies inequality~\eqref{scb1} because
\begin{equation*}
\sum_{i=0}^{k+3} \binom{2^\n}{i} \le
   \left(\sum_{i=0}^3 \binom{2^\n}{i}\right)
   \left(\sum_{i=0}^k \binom{2^\n}{i}\right),
\end{equation*}
which is true because any length-$2^\n$ bit word of Hamming weight
at most $k+3$ can be expressed in at least one way (usually many) as the sum
of a word of Hamming weight at most~3 and a word of Hamming weight
at most~$k$.
\end{remark}

It would be nice to get a better result combining a level-$k$ certificate
for~$f$ and a level-$k'$ certificate for~$f'$ to get a
level-$(k{+}k')$ certificate for~$f^*$, but the straightforward
way of doing that (by combining each triple of the first certificate
with each triple of the second certificate) does not yield
independence or orthogonality
between quadratics from the first certificate and quadratics from
the second certificate
(certificate requirements \ref{item:orthognewnew} and~\ref{item:indepnewnew}).

However, one can get another partial result:

\begin{theorem}
\label{thm:1+1=2}
If $f$, $f'$, and $f^*$ are as in Theorem~\ref{thm:upward} and
$f$ and~$f'$ both have
level-1 certificates, then $f^*$ has a level-2 certificate.
\end{theorem}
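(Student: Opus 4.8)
The plan is to build a level-$2$ certificate for $f^*=\y_1\cdots\y_{\n'}f+\x_1\cdots\x_{\n}f'$ directly, combining the single subproof-family of each given level-$1$ certificate into a single family of level-$2$ subproofs, but with the quadratic multipliers and coefficient specifiers chosen carefully so that the two "new" rows coming from $f$ and from $f'$ are automatically orthogonal to and independent of each other. First I would recall (from the proof of Theorem \ref{thm:upward}, which I may assume) the structure of the level-$k$ certificate constructed there for $f^*$ when $f'$ has a level-$0$ certificate; in particular, I would extract the fact that if $q$ is a quadratic on the $\x$'s with $qf$ of degree $\le\n-2$, then $(\y_1\cdots\y_{\n'})\,q\,f^*$ (or the relevant lifted multiplier) has degree $\le\n+\n'-2$, and symmetrically for quadratics on the $\y$'s. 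The key new point is that a quadratic multiplier built purely from $\x$-variables times the monomial $\y_1\cdots\y_{\n'}$ interacts with the $f$-part of $f^*$, while one built from $\y$-variables times $\x_1\cdots\x_{\n}$ interacts with the $f'$-part, and because these two pieces of $f^*$ live on disjoint variable sets, the top-degree coefficient of a product of one multiplier of each type applied to $f^*$ will vanish (the monomial $\x_1\cdots\x_\n\y_1\cdots\y_{\n'}$ cannot be produced). That is exactly certificate requirement~\ref{item:orthognewnew} for the cross pair.

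Next I would set up the case division. Write the two level-$1$ certificates as lists of triples $\langle C_a, [q_a], [r_a]\rangle$ for $f$ (indexed by $a$) and $\langle C'_b, [q'_b], [r'_b]\rangle$ for $f'$ (indexed by $b$). The coquartic monomials for $f^*$ (degree $\n+\n'-4$) split, after the argument used in Theorem~\ref{thm:upward}, into those whose coefficient must be tracked for the $f$-part and those for the $f'$-part, plus "mixed" monomials whose coefficients either do not affect the relevant top-degree computations or can be shown to contribute trivially; I would quote the corresponding bookkeeping from the Appendix proof of Theorem~\ref{thm:upward}. The new certificate consists of the triples
\[
\bigl\langle\; C_a \wedge C'_b,\;\; [\,\tilde q_a,\;\tilde q'_b\,],\;\; [\,\tilde r_a,\;\tilde r'_b\,]\;\bigr\rangle
\]
ranging over all pairs $(a,b)$, where $\tilde q_a,\tilde r_a$ are the lifts of $q_a,r_a$ used in the Theorem~\ref{thm:upward} construction (the multiplier $q_a$ gets paired with the monomial $\y_1\cdots\y_{\n'}$, and the coefficient specifier $r_a$ gets padded with the $\y$-variables to a degree-$(\n+\n'-2)$ monomial specifier), and similarly $\tilde q'_b,\tilde r'_b$ are lifts involving $\x_1\cdots\x_\n$. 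Requirement~\ref{item:fcs} for the new list follows because $\{C_a\}$ covers the $f$-coquartic coordinates and $\{C'_b\}$ covers the $f'$-coquartic coordinates (each being a cover for a level-$1$ certificate) and these coordinate blocks are independent, so the products $C_a\wedge C'_b$ cover the product space; the remaining mixed coordinates are unconstrained but, as above, irrelevant to the computations.

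Then I would verify requirements~\ref{item:indepoldold}--\ref{item:indepnewnew} for each triple $\langle C_a\wedge C'_b,\ldots\rangle$. Requirement~\ref{item:indepoldold} (independence of the degree-$(\n+\n'-2)$ parts of the $\x_i f^*$ and $\y_j f^*$) is exactly what the Theorem~\ref{thm:upward} construction already establishes from the level-$0$ (equivalently, requirement~\ref{item:indepoldold}) data for $f$ and $f'$. Requirement~\ref{item:orthogoldnew} ($\tilde q_a f^*$ and $\tilde q'_b f^*$ have degree $\le\n+\n'-2$) and requirement~\ref{item:indepoldnew} ($\tilde r_a$ kills the $\x_i f^*$ and $\y_j f^*$, and likewise $\tilde r'_b$) again come from the Theorem~\ref{thm:upward} lifting lemmas applied separately to each factor. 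The genuinely new verifications are: within-pair orthogonality $\langle \tilde q_a f^*,\ \tilde q'_b f^*\rangle=0$ — this is the disjoint-variables cancellation described above, so the constraint is vacuous and no equation is added to $C_a\wedge C'_b$ — and the independence requirement~\ref{item:indepnewnew} for the pair $(\tilde q_a, \tilde q'_b)$, i.e. $\tilde r'_b$ applied to $\tilde q_a(f^*+\sum c_M M)$ must be $0$ and $\tilde r_a$ applied to $\tilde q'_b(f^*+\cdots)$ must be $0$; these hold because $\tilde r'_b$ is supported on monomials containing the $\y$-block in a pattern incompatible with the monomials produced by $\tilde q_a$ acting on the $\x$-side, and symmetrically. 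The "diagonal" cases $r_j$ for $q_j g$ with $j=j'$ reduce to the corresponding level-$1$ statements for $f$ and $f'$ separately, which are given. I expect the main obstacle — and the one place where one must be careful rather than formal — to be certifying that this separation really is clean: one must confirm that no coefficient specifier from $r_a$ (after $\y$-padding) accidentally picks up a contribution from the $f'$-part of $f^*$ or from a mixed coquartic monomial, and vice versa, so that the cross conditions are genuinely vacuous and the case-cover $\{C_a\wedge C'_b\}$ suffices. This is precisely the point where the disjointness of the variable sets $\{\x_i\}$ and $\{\y_j\}$ is doing all the work, and I would isolate it as a short lemma (an analogue of the key lemma in the Appendix proof of Theorem~\ref{thm:upward}) before assembling the certificate.
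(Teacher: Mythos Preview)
Your plan has a genuine gap at precisely the point you flagged as ``the one place where one must be careful'': the cross conditions are \emph{not} vacuous, because you have overlooked the effect of the mixed coquartic monomials of~$g$.

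Concretely, in requirement~\ref{item:orthognewnew} you must check that $q_a q'_b\bigl(f^*+\sum_{M^*} c_{M^*}M^*\bigr)$ has no term $\x_1\cdots\x_\n\y_1\cdots\y_{\n'}$. You correctly observe that $q_aq'_bf^*$ contributes nothing in that degree; but $q_a q'_b$ is a genuine degree-$4$ polynomial in mixed variables, and a coquartic monomial $M^*=\comppoly{(\x_i\x_j\y_{i'}\y_{j'})}$ (missing exactly two $\x$'s and two $\y$'s) multiplied by the term $\x_i\x_j\y_{i'}\y_{j'}$ of $q_aq'_b$ produces the full monomial. Hence the top coefficient of $q_aq'_b(f^*+\sum c_{M^*}M^*)$ equals $\sum_{i,j,i',j'} c_{\comppoly{(\x_i\x_j\y_{i'}\y_{j'})}}$, summed over monomials $\x_i\x_j$ of~$q_a$ and $\y_{i'}\y_{j'}$ of~$q'_b$. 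This is not zero in general. The same phenomenon breaks the cross case of requirement~\ref{item:indepnewnew}: $\tilde r'_b$ applied to $q_a\sum c_{M^*}M^*$ picks up a second such sum. So your product family $\{C_a\wedge C'_b\}$ is not a certificate: either these two affine equations must be appended to each $C_a\wedge C'_b$ (in which case the flats no longer cover and requirement~\ref{item:fcs} fails), or the requirements themselves fail.

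The paper's proof repairs exactly this: it appends the two extra equations to each $C^*$, and then, to restore coverage, it introduces an entirely new family of subproofs---one for each mixed coefficient $c_{\comppoly{(\x_i\x_j\y_{i'}\y_{j'})}}$ appearing in those equations---whose condition is simply $c_{\comppoly{(\x_i\x_j\y_{i'}\y_{j'})}}=1$ and whose quadratics are the \emph{mixed} products $\x_i\y_{i'}$ and $\x_i\y_{j'}$ (with coefficient specifiers $\coef{\comppoly{(\x_j\y_{j'})}}$ and $\coef{\comppoly{(\x_j\y_{i'})}}$). These extra subproofs are the missing idea in your argument; without them, neither the orthogonality/independence nor the coverage requirement can be satisfied.
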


Again the proof proceeds by using the two given certificates to construct a
new certificate for~$f^*$ and verifying that requirements
\ref{item:indepoldold}--\ref{item:fcs} hold for it; the details
are given in the Appendix.

\section{Cryptographic properties and testing polynomials}
\label{sec:crypto}

A number of properties of Boolean polynomials have been identified
as relevant for measuring the strength of these polynomials in
cryptographic applications; these properties include balance,
resiliency, high nonlinearity, and
algebraic immunity.  This topic is far too broad to even summarize
here; overviews of it are given in
\cite{Carlet-chapter}, \cite{CS-book}, and~\cite{Mesnager-book}.
But here is a brief review of a few of the definitions.
An $\n$-variable Boolean function (polynomial)~$f$ is
\emph{balanced} if it has weight $2^{\n-1}$.  Polynomial~$f$ is
\emph{\ordinal{$t$}-order correlation immune} if, for any $k$ with
$1 \le k \le t$ and any distinct variables $\x_{i_1},\dots,\x_{i_k}$,
$f+\x_{i_1}+\dots+\x_{i_k}$ is balanced.  A polynomial that is both
balanced and \ordinal{$t$}-order correlation immune is called
\emph{$t$-resilient}.  The \emph{nonlinearity} of~$f$ is the
distance from~$f$ to $\RM(1,\n)$.

One may wonder how the minimum weight of a cocubic polynomial interacts with
these cryptographic properties. There are two main questions one might ask:
\begin{enumerate}
\item Do the representative cocubic polynomials or their minimum-weight
  coset members from the previous sections exhibit any of these
  cryptographic properties?
\item What are the minimum weights of cocubic polynomials that do exhibit
  those cryptographic properties?
\end{enumerate}
We will address these questions in turn, starting with Question 1.

The representative cubic polynomials on the lists from
\cite{Hou} and~\cite{Brier-Langevin-web} were chosen at least partially
on the basis of simplicity
(lower numbers of terms), which means that the
corresponding cocubic polynomials have very low weight,
even before we modify them to get minimum-weight members of
their cosets.  It then follows from known results (see the
references above) that the polynomials we work with here
cannot have most of the cryptographic properties listed above.
The exception is \ordinal{$t$}-order correlation immunity,
which is possible even for highly unbalanced polynomials.
We checked and found that
none of the representative cocubics or their minimum-weight versions
is first-order correlation immune, except for the zero polynomial.

These representative polynomials seem unlikely to be of
cryptographic interest directly, so we now turn our attention to
Question 2. Given a cryptographically-interesting polynomial, we
can test its minimum weight
by using invariants
from \cite{Hou} and \cite{Brier-Langevin} to determine which
representative polynomial it is equivalent to.
Here and for the rest of this section, 
``equivalent'' means
``linearly equivalent modulo coquartic polynomials'' or
``linearly equivalent over $\RM(\n-4,\n)$.''

\begin{remark}
In \cite{Brier-Langevin}, the direct product of two
new invariants is used to discriminate the 349 equivalence classes
of 9-variable cubic polynomials, but we found
that the second new invariant alone suffices.
\end{remark}

\begin{table}[ht]
\begin{center}
\begin{tabular}{|c|c|c|c|c|}
   \hline
   Parameters & Source & Equivalent to & Min.~weight
   \\ \hline
   (7,0,4,44) & Symmetric & $\comppoly{f_7}$ & 16
   \\ {}[8,0,5,64] & Symmetric & $\comppoly{f_7}$ & 16
   \\ {}[9,0,6,120] & Symmetric & $\comppoly{\BLpoly{346}}$ & 20
   \\ (8,2,5,112) & \cite[Th.~10(d)]{Sarkar-Maitra-2000}
      & $\comppoly{f_2}$ & 8
   \\ {}[8,1,5,116] & \cite[p.~1828]{Maitra-Pasalic}    % 2002
      & $\comppoly{f_2}$ & 8
   \\ (9,2,6,232) & \cite[Pr.~4(1)]{Sarkar-Maitra-2004}
      & $\comppoly{\BLpoly{348}} = \comppoly{f_2}$ & 8
   \\ (9,2,6,240) & \cite[App.~B]{Clark}      % 2004
      & $\comppoly{\BLpoly{321}} = \comppoly{f_{14}}$ & 18
   \\ (6,2,3,24) & \cite[Lemma~3.3]{Borissov}      %2005
      & $f_5$ & 16
   \\ (8,2,5,16) & \cite[Th.~5.8]{Borissov}      %2005
      & $\comppoly{f_7}$ & 16
   \\ (7,1,4,40) & \cite[Pr.~5.13]{Borissov}      %2005
      & $\comppoly{f_9}$ & 20
   \\ (7,2,4,56) & \cite[Table~4]{Stanica-Maitra}     % 2008
      & $\comppoly{f_9}$ & 20 
   \\ (10,2,7,488) & \cite[Table~1]{Liu-Youssef}     % 2009
      &   & 28--38
   \\ \hline
\end{tabular}
\caption{Minimum weights of various known cocubic polynomials}
\label{tab:cpweights}
\end{center}
\end{table}

Table~\ref{tab:cpweights} shows the results we got by testing
various cocubic polynomials from the literature.  In each case
(except the last), we found out which representative cocubic
was equivalent to the given one, and looked up the corresponding
minimum weight.

The ``Parameters'' column uses a notation appearing in
\cite{Maitra-Pasalic} and other sources.
An $(n,m,d,x)$ polynomial (or function) is a Boolean polynomial on
$n$~variables of degree~$d$ which is $m$-resilient and has nonlinearity~$x$.
The unbalanced version of this is:
An $[n,m,d,x]$ polynomial (or function) is a Boolean polynomial on
$n$~variables of degree~$d$ which is \ordinal{$m$}-order correlation immune
and has nonlinearity~$x$.

Other notes on Table~\ref{tab:cpweights}:
\begin{itemize}
\item
``Symmetric'' means the symmetric homogeneous cocubic polynomial
on $\n$~variables.  These polynomials were used
in \cite{McLoughlin} to get the lower bound mentioned in
Section~\ref{sec:morevar}.
\item The equivalences and minimum weights for the polynomials
from~\cite{Borissov} were given in \cite{Borissov}.
% error in Borissov p. 1187 table: last two vectors should be
%   (1,1,0,0,0,0,...,0,0,0,0)
%   (0,0,1,0,1,0,...,1,0,1,0)
\item
All 72 of the rotationally symmetric (7,2,4,56)
polynomials in \cite[Table~4]{Stanica-Maitra}
are in the same equivalence class.
\item
For the two 10-variable polynomials from~\cite{Liu-Youssef},
since we do not have a precomputed catalog for 10~variables,
we applied our methods directly to get lower and upper bounds
on the minimum weight; the same bounds were obtained for both.
(We verified using the invariants in~\cite{Hou}
that these two polynomials are not equivalent.)
\end{itemize}

For convenience, we show the relevant representative cubic polynomials
(other than those in Section~\ref{sec:6var}) here:
\begin{align*}
f_7 &= \x_1\x_2\x_7 + \x_3\x_4\x_7 + \x_5\x_6\x_7,
\\ f_9 &= \x_1\x_2\x_3 + \x_1\x_4\x_7 + \x_2\x_4\x_5 + \x_3\x_4\x_6,
\\ f_{14} &= \x_1\x_2\x_3 + \x_1\x_7\x_8 + \x_4\x_5\x_6 + \x_4\x_7\x_8,
\\ \BLpoly{346} &= \x_1\x_2\x_9 + \x_3\x_4\x_9 + \x_5\x_6\x_9 + \x_7\x_8\x_9.
\end{align*}

A new type of covering radius was introduced in~\cite{Kurosawa}:
$\hat\rho(t,r,n)$ is defined to be the maximum distance between
a $t$-resilient function and $\RM(r,n)$.  Since we are working with
$\RM(\n-4,\n)$ in this paper, and since the Siegenthaler
bound~\cite[Th.~1]{Siegenthaler} states that a $t$-resilient function
on $\n$ variables has degree at most $\n-t-1$, the instances
of~$\hat\rho$ that are most relevant here are of the form
$\hat\rho(2,\n-4,\n)$.

It is known that $\hat\rho(2,1,5)=8$~\cite{Kurosawa}
and $\hat\rho(2,2,6)=16$~\cite{Borissov}. Reference~\cite{Borissov} also
gives the bounds $16\le\hat\rho(2,3,7)\le22$ and
$16\le\hat\rho(2,4,8)$, while
\cite{Wang-Tan-Prabowo}~gives $\hat\rho(2,3,7)\le 20$.

The Siegenthaler bound implies that
$\hat\rho(2,\n-4,\n)$ is at most the covering radius of
$\RM(\n-4,\n)$ in $\RM(\n-3,\n)$; therefore, 
our upper bound results imply $\hat\rho(2,4,8)\le 26$
and $\hat\rho(2,5,9)\le 32$.
For $\hat\rho(2,6,10)$ we do not know of an upper bound
better than~50, obtained from the facts that the covering
radius of~$\RM(6,10)$ is at most~51
\cite[Table~9.1 and (9.3.4)]{Cohen-book}
and that $\hat\rho(2,6,10)$ must be even because all polynomials
involved have degree less than~$10$.

Our computations in Table~\ref{tab:cpweights} give three improvements
to the known lower bounds on $\hat\rho(2,\n-4,\n)$: the 
(7,2,4,56) polynomials from~\cite{Stanica-Maitra}
give $\hat\rho(2,3,7) \ge 20$, the (9,2,6,240)
polynomial from~\cite{Clark} gives
$\hat\rho(2,5,9) \ge 18$, and the (10,2,7,488)
polynomials from~\cite{Liu-Youssef} give
$\hat\rho(2,6,10) \ge 28$.

So the updated bounds on $\hat\rho(2,\n-4,\n)$ for $7 \le \n \le 10$ are:
\begin{gather*}
\hat\rho(2,3,7) = 20,
\\ 16 \le \hat\rho(2,4,8) \le 26,
\\ 18 \le \hat\rho(2,5,9) \le 32,
\\ 28 \le \hat\rho(2,6,10) \le 50.
\end{gather*}

\section{Conclusion and open questions}
\label{sec:conclusion}

We have given a method for producing verifiable certificates
of lower bounds for the minimum weights of cosets of $\RM(\n-4,\n)$
in $\RM(\n-3,\n)$, and described a simple method for searching
for upper bounds on these minimum weights.  Using these methods,
we have improved the known bounds on the covering radius of
$\RM(\n-4,\n)$ in $\RM(\n-3,\n)$ for $8 \le \n \le 14$, with an exact
value of~$26$ for $\n=8$.  We have also improved the known bounds
on the modified covering radius $\hat\rho(2,\n-4,\n)$ for
$7 \le \n \le 10$, with an exact value of~$20$ for $\n=7$.

It would be a massive project to apply these methods
to compute the covering radius of $\RM(6,10)$ in $\RM(7,10)$;
the list of representative polynomials
would have length 3691561~\cite{Hou}, and we would
need a good upper bound for each, although we only need
a lower bound for one of them.
For $m>10$, the situation is even worse: the number of cosets
of $\RM(\n-4,\n)$ in $\RM(\n-3,\n)$ is
$2^{\raisebox{1ex}{\mbox{$\scriptstyle \binom{m}{3}$}}}$,
while the size of $GL(\n,2)$ is less than $2^{\n^2}$, 
so the number of representatives needed is greater than
$2^{\raisebox{1ex}{\mbox{$\scriptstyle \binom{m}{3}-m^2$}}}
= 2^{(m^2(m-9)+2m)/6}$.

Do there exist certificates of arbitrarily high level?
In particular, can one combine a level-$k$ certificate
with a level-$k'$ certificate to form a level-$(k{+}k')$
certificate as in Section~\ref{sec:morevar}, even
for a restricted family of polynomials $f$ and~$f'$?

Do the methods extend to handle \ordinal{$(\n-4)$}-order nonlinearity
for $\n$-variable polynomials of degree greater than $\n-3$?
In other words, can they be used to compute the covering
radius of $\RM(\n-4,\n)$ within the entire space $\RM(\n,\n)$?
This is not clear at present; the basic lower bound results
would need substantial revision.

Are these methods sufficient in general?  Given a cocubic polynomial~$f$
on $\n$ variables at distance~$d$ from $\RM(\n-4,\n)$, will there always
exist a certificate for~$f$ giving a lower bound of~$d$?  Might one need
to look for other truth-table rows besides those of the form~$qf$
with $q$~quadratic?

Are the problems studied here provably difficult?  It is known that
the problem of verifying an upper bound on the covering radius
of a general binary linear code is complete for the second
level of the polynomial-time hierarchy~\cite{McLoughlin2}.
Can one prove a lower bound on
complexity that is specific to Reed--Muller codes?  For instance,
is the problem of verifying a given upper bound on the
distance from a given polynomial to $\RM(r,\n)$ NP-complete?

\appendix
\section{Appendix: Proofs of Theorems \ref{thm:upward} and \ref{thm:1+1=2}}

\begin{proof}[Proof of Theorem \ref{thm:upward}]
We get a level-$k$ certificate for~$f^*$ by changing each triple
$\langle C,q,r \rangle$ of the level-$k$ certificate for~$f$
into a new triple $\langle C^*,q^*,r^* \rangle$ as follows:
\begin{itemize}
\item Each variable $c_M$ occurring in~$C$ (where $M$ is a coquartic monomial
in the variables $\x_1,\dots,\x_{\n}$) is replaced in $C^*$ with
$c_{M^*}$, where $M^* = M\y_1\cdots\y_{\n'}$.
\item $q^* = q$.
\item Each coefficient specifier $\coef{M}$ occurring in~$r$
(where $M$ is a coquadratic monomial in the variables $\x_1,\dots,\x_{\n}$)
is replaced in $C^*$ with $\coef{M^*}$, where $M^* = M\y_1\cdots\y_{\n'}$.
\end{itemize}
We now verify that all of the certificate requirements are
met by the new certificate.

Requirement~\ref{item:indepoldold}: The degree-$(\n{+}\n'{-}2)$ parts
of the polynomials $\x_if^*$ are just $\y_1\cdots\y_{\n'}$ times the
degree-$(\n{-}2)$ parts of the polynomials $\x_if$, since $\x_i$
times $\x_1\cdots\x_{\n}f'$ is just $\x_1\cdots\x_{\n}f'$, which has
degree $\n+\n'-3$.  Similarly, the degree-$(\n{+}\n'{-}2)$ parts
of the polynomials $\y_if^*$ are just $\x_1\cdots\x_{\n}$ times the
degree-$(\n'{-}2)$ parts of the polynomials $\y_if'$.  These two
collections of polynomials are separately linearly independent,
and they are supported on disjoint sets of monomials, so they
are jointly linearly independent.

Requirement~\ref{item:orthogoldnew}: We have
$q_jf^* = \y_1\cdots\y_{\n'}q_jf+\x_1\cdots\x_{\n}q_jf'$,
where $q_jf$ has degree at most $\n-2$ by assumption and
$\x_1\cdots\x_{\n}q_jf'$ is either $\x_1\cdots\x_{\n}f'$ or 0,
depending on the number of terms in~$q_j$, so
$q_jf^*$ has degree at most $\n+\n'-2$.

Requirement~\ref{item:orthognewnew}: Any sum $\sum_{M^*} c_{M^*}M^*$
of coquartic terms can be split into two parts: the part $S_1$
which is the sum of those terms for which $M^*$ has the
form $M\y_1\cdots\y_{\n'}$
for some coquartic monomial $M$ on $\x_1,\dots,\x_{\n}$, and
the part $S_2$ which is the sum of those terms for which $M^*$~is not a
multiple of $\y_1\cdots\y_{\n'}$.  The product
$q_{j'}q_{j}(\y_1\cdots\y_{\n'}f+S_1)$ has degree less than $\n+\n'$
by requirement~\ref{item:orthognewnew} for the given certificate, while
$q_{j'}q_{j}S_2$ has no terms that are multiples of $\y_1\cdots\y_{\n'}$,
and $q_{j'}q_{j}\x_1\cdots\x_{\n}f'$ is again either $\x_1\cdots\x_{\n}f'$
or~0.  Therefore, $q_{j'}q_{j}(f^*+\sum_{M^*} c_{M^*}M^*)$ has degree
less than $\n+\n'$.

Requirement~\ref{item:indepoldnew}: Since every monomial referred to
by the coefficient combination $r^*_j$ has $\y_1\cdots\y_{\n'}$ as
a factor, $r^*_j$ for $\x_i\x_1\cdots\x_{\n}f'$ or $\y_i\x_1\cdots\x_{\n}f'$
is 0.  The combination $r^*_j$ for $\y_i\y_1\cdots\y_{\n'}f$ is 0
since $\y_i\y_1\cdots\y_{\n'}f = \y_1\cdots\y_{\n'}f$ has degree less than
$\n+\n'-2$, while $r^*_j$ for $\x_i\y_1\cdots\y_{\n'}f$ is 0
by requirement~\ref{item:indepoldnew} for the given certificate.
Therefore, $r^*_j$ for $\x_if^*$ or for~$\y_if^*$ is 0.

Requirement~\ref{item:indepnewnew}: Let us again split the sum
$\sum_{M^*} c_{M^*}M^*$
of coquartic terms into two parts $S_1$ and~$S_2$ as above.
Since every monomial referred to
by $r^*_j$ has $\y_1\cdots\y_{\n'}$ as
a factor, $r^*_j$ for $q_{j'}\x_1\cdots\x_{\n}f'$ is 0, and
so is $r^*_j$ for $q_{j'}S_2$.  And $r^*_j$ for
$q_{j'}(\y_1\cdots\y_{\n'}f+S_1)$ is 0 if $j'<j$ or 1~if $j'=j$
by requirement~\ref{item:indepnewnew} for the given certificate.
Therefore, $r^*_j$ for $q_{j'}(f^*+\sum_{M^*} c_{M^*}M^*)$ is 0 if
$j'<j$, 1 if $j'=j$, as required.

Requirement~\ref{item:fcs}: The conditions~$C^*$ are the same as
the conditions~$C$ except for a global renaming of variables; since
every assignment to the old variables satisfies at least one condition~$C$,
every assignment to the new variables satisfies at least one condition~$C^*$.
(The fact that there are additional new variables not mentioned by
any~$C^*$ at all does not affect this.)

So the new certificate is valid, and we are done.
\end{proof}

\begin{proof}[Proof of Theorem \ref{thm:1+1=2}]
We construct the new certificate as follows.  First, for each
subproof $\langle C,q,r \rangle$ in the certificate for~$f$ and each
subproof $\langle C',q',r' \rangle$ in the certificate for~$f'$,
create a subproof $\langle C^*,q^*,r^* \rangle$ where:
\begin{itemize}
\item $C^*$ consists of the equations in~$C$ with all subscripts
multiplied by $\y_1\cdots\y_{\n'}$, the equations in~$C'$ with all
subscripts multiplied by $\x_1\cdots\x_{\n}$, and two new equations
described below.
\item $q^* = [q_1,q'_1]$.
\item $r^* = [r^*_1,r^*_2]$ where $r^*_1$ is $r_1$ with all
monomials multiplied by $\y_1\cdots\y_{\n'}$ and $r^*_2$ is $r'_1$ with all
monomials multiplied by $\x_1\cdots\x_{\n}$.
\end{itemize}
The two new equations are of the form
\begin{equation}
\label{eq:neweq}
\sum_{i,j,i',j'} c_{\comppoly{(\x_i\x_j\y_{i'}\y_{j'})}}=0;
\end{equation}
in the first equation the sum runs over all $i<j$ and $i'<j'$
where $\x_i\x_j$ is a monomial of~$q_1$ and $\y_{i'}\y_{j'}$
is a monomial of~$q'_1$, while
in the second equation the sum runs over all $i<j$ and $i'<j'$
where $\x_i\x_j$ is a monomial of~$q_1$ and $\comppoly{(\y_{i'}\y_{j'})}$
is a monomial mentioned in~$r'_1$.  These two equations are
needed to ensure that certificate requirements
\ref{item:orthognewnew} and~\ref{item:indepnewnew} hold for the
new subproof.  But now, in order to get certificate requirement~\ref{item:fcs}
to hold for the new certificate, we need some additional subproofs:
for all $(i,j,i',j')$ such that $c_{\comppoly{(\x_i\x_j\y_{i'}\y_{j'})}}$
is mentioned in one of the new equations~$\eqref{eq:neweq}$,
add the new triple
\begin{equation}
\label{eq:newsubproof}
\langle [c_{\comppoly{(\x_i\x_j\y_{i'}\y_{j'})}}=1],
[\x_i\y_{i'},\x_i\y_{j'}],
[\coef{\comppoly{(\x_j\y_{j'})}},\coef{\comppoly{(\x_j\y_{i'})}}] \rangle
\end{equation}
to the new certificate.
We will now see that combining the subproofs $\langle C^*,q^*,r^* \rangle$
above with the new triples~\eqref{eq:newsubproof}
gives a complete level-2 certificate for~$f^*$.

Requirement~\ref{item:indepoldold}: The same proof as for
Theorem~\ref{thm:upward} works here.

Requirement~\ref{item:orthogoldnew}: For
$\langle C^*,q^*,r^* \rangle$, the proof for Theorem~\ref{thm:upward}
works for~$q_1$, and the same with the roles of $\x$ and~$\y$ reversed
works for~$q'_1$.  For~\eqref{eq:newsubproof}, just note that
$\x_i\y_{i'}f^* = \x_i\y_1\cdots\y_{\n'}f+\y_{i'}\x_1\cdots\x_{\n}f'$.

Requirement~\ref{item:orthognewnew}: For $\langle C^*,q^*,r^* \rangle$,
note that $q_1q'_1(f^*+\sum_{M^*} c_{M^*}M^*)$ is the sum of parts
$q_1q_1'\y_1\cdots\y_{\n'}f$, $q_1q_1'\x_1\cdots\x_{\n}f'$, and
$q_1q_1'\sum_{M^*} c_{M^*}M^*$; the first part has no terms that are
multiples of $\x_1\cdots\x_{\n}$, the second part has no terms that are
multiples of $\y_1\cdots\y_{\n'}$, and, in the third part, the coefficient of
$\x_1\cdots\x_{\n}\y_1\cdots\y_{\n'}$ is the left-hand side of
the first equation~\eqref{eq:neweq}, which is assumed to be~$0$.
For~\eqref{eq:newsubproof} the argument is the same, except that the
third part $\x_i\y_{i'}\x_i\y_{j'}\sum_{M^*} c_{M^*}M^*$ has no
terms of degree $\n+\n'$ because $\x_i\y_{i'}\x_i\y_{j'}=\x_i\y_{i'}\y_{j'}$
has degree less than~4.

Requirement~\ref{item:indepoldnew}: For
$\langle C^*,q^*,r^* \rangle$, the proof for Theorem~\ref{thm:upward}
works for~$r^*_1$, and the same with the roles of $\x$ and~$\y$ reversed
works for~$r^*_2$.  For~\eqref{eq:newsubproof}, note that every
monomial in~$f^*$ is missing either at least three~$\x$'s or at
least three~$\y$'s, so every monomial in $\x_\ell f^*$ or $\y_\ell f^*$
is missing either at least two~$\x$'s or at least two~$\y$'s.

Requirement~\ref{item:indepnewnew}: For
$\langle C^*,q^*,r^* \rangle$, the proof that
$r^*_1$ for $q_1(f^*+\sum_{M^*} c_{M^*}M^*)$ is~1
is the same as for Theorem~\ref{thm:upward}, and the proof that
$r^*_2$ for $q'_1(f^*+\sum_{M^*} c_{M^*}M^*)$ is~1
is the same with the roles of $\x$ and~$\y$ reversed.
To see that $r^*_2$ for $q_1(f^*+\sum_{M^*} c_{M^*}M^*)$ is~0,
separate this product into parts $q_1\y_1\cdots\y_{\n'}f$,
$q_1\x_1\cdots\x_{\n}f'$, and $q_1\sum_{M^*} c_{M^*}M^*$;
the first part has no terms that are multiples of $\x_1\cdots\x_{\n}$,
the second part has degree at most~$\n+\n'-3$, and
$r^*_2$ for the third part is the left-hand side of
the second equation~\eqref{eq:neweq}, which is assumed to be~$0$.
For~\eqref{eq:newsubproof}, note that all terms of
$\x_i\y_{i'}f^*$ or $\x_i\y_{j'}f^*$ are multiples of either
$\x_1\cdots\x_{\n}$ or $\y_1\cdots\y_{\n'}$, so we compute that
$\coef{\comppoly{(\x_j\y_{j'})}}$ for $\x_i\y_{i'}(f^*+\sum_{M^*} c_{M^*}M^*)$
is $c_{\comppoly{(\x_i\x_j\y_{i'}\y_{j'})}}$,
$\coef{\comppoly{(\x_j\y_{i'})}}$ for $\x_i\y_{i'}(f^*+\sum_{M^*} c_{M^*}M^*)$
is~0, and
$\coef{\comppoly{(\x_j\y_{i'})}}$ for $\x_i\y_{j'}(f^*+\sum_{M^*} c_{M^*}M^*)$
is $c_{\comppoly{(\x_i\x_j\y_{i'}\y_{j'})}}$, which is the desired result
since we assumed $c_{\comppoly{(\x_i\x_j\y_{i'}\y_{j'})}}=1$.

Requirement~\ref{item:fcs}: Suppose we have assigned a value from $\GFtwo$
to each variable $c_{M^*}$ where $M^*$ is a coquartic monomial
in $\x_1,\dots,\x_{\n},\y_1,\dots,\y_{\n'}$; we must show that this assignment
satisfies the precondition of at least one of the triples in the
new certificate.  First, assign to each coquartic monomial~$M$ in
$\x_1,\dots,\x_{\n}$ the value $c_M = c_{M\y_1\cdots\y_{\n'}}$, and
find a subproof $\langle C,q,r \rangle$ from the certificate for~$f$
such that this assignment satisfies~$C$.
Then assign to each coquartic monomial~$M'$ in
$\y_1,\dots,\y_{\n'}$ the value $c_{M'} = c_{M'\x_1\cdots\x_{\n}}$, and
find a subproof $\langle C',q',r' \rangle$ from the certificate for~$f'$
such that this assignment satisfies~$C'$.
Let $\langle C^*,q^*,r^* \rangle$ be created from $\langle C,q,r \rangle$
and $\langle C',q',r' \rangle$ as above.  If all of the coefficients
mentioned in the two new equations in~$C^*$ are assigned the value~0,
then the assignment satisfies~$C^*$; if one of these coefficients
$c_{\comppoly{(\x_i\x_j\y_{i'}\y_{j'})}}$ is assigned the value~1,
then the assignment satisfies the precondition of the
corresponding triple~\eqref{eq:newsubproof}.

So the new certificate is valid, and we are done.
\end{proof}

\end{document}